\newtheorem{definition}{Definition}
\newtheorem{proposition}{Proposition}
\newtheorem{theorem}{Theorem}
\newtheorem{example}{Example}
\newtheorem{lemma}{Lemma}
\newtheorem{corollary}{Corollary}
\newcommand{\cl}{\mathrm{cl}}
\newcommand{\SE}{\mathrm{SE}}
\newcommand{\ork}{\mathrm{or}}
\newcommand{\irk}{\mathrm{ir}}
\newcommand{\urk}{\mathrm{ur}}
\newcommand{\lrk}{\mathrm{lr}}
\newcommand{\spancl}{\mathrm{span}}
\newcommand{\uspancl}{\mathrm{uspan}}
\newenvironment{proof}{\textbf{Proof}}{$\Box$\\}
\begin{document}
\title{Entropy of Closure Operators}
\author{Maximilien Gadouleau}

\maketitle

\begin{abstract}
The entropy of a closure operator has been recently proposed for the study of network coding and secret sharing. In this paper, we study closure operators in relation to their entropy. We first introduce four different kinds of rank functions for a given closure operator, which determine bounds on the entropy of that operator. This yields new axioms for matroids based on their closure operators. We also determine necessary conditions for a large class of closure operators to be solvable. We then define the Shannon entropy of a closure operator, and use it to prove that the set of closure entropies is dense. Finally, we justify why we focus on the solvability of closure operators only.
\end{abstract}

AMS 2010 Subject classification: 94A17, 06A15, 05B35.

\section{Introduction}

Network coding is a novel means to transmit data through a network, where intermediate nodes are allowed to combine the packets they receive \cite{ACLY00}. In particular, linear network coding \cite{LYC03} is optimal in the case of one source; however, it is not the case for multiple sources and destinations \cite{Rii04, DFZ05}. Although for large dynamic networks, good heuristics such as random linear network coding \cite{KM03, HMK+06} can be used, maximizing the amount of information that can be transmitted over a static network is fundamental but very hard in practice. Solving this problem by brute force, i.e. considering all possible operations at all nodes, is computationally prohibitive. The network coding solvability problem is given as follows: given a network (with corresponding graph, sources, destinations, and messages), can all the messages be transmitted? This problem is very difficult, for instance some networks correspond to determining whether $k$ mutually orthogonal Latin squares of order $A$ exist.

Several major advances have been made on this problem. First of all, it can always be reduced to a multiple unicast instance, where each source sends a different message, requested to a corresponding unique destination.  In \cite{Rii06}, the network coding solvability problem is reduced to a problem on arbitrary directed graphs, thus removing the asymmetry between sources, intermediate nodes, and destinations. Notably, \cite{Rii07} introduces the entropy of a directed graph (not to be mistaken with K\"orner's graph entropy in \cite{Kor71}); calculating this entropy solves the network solvability problem. A more combinatorial approach is then given by the so-called guessing number of a graph, which is closely related to the entropy \cite{Rii07}. The guessing number of graphs is studied further in \cite{GR11}, where it is proved that the guessing number of a directed graph is equal to the independence number of a related undirected graph. The guessing number of undirected graphs is further explored in \cite{CM11}.

A closure operator on the vertex set of a digraph is introduced in \cite{Gad13}. Network coding solvability is then proved to be a special case of a more general problem, called the closure solvability problem, for the closure operator defined on a digraph related to the network coding instance. The latter problem also generalises the search for ideal secret sharing schemes \cite{BD91}. The main interest of closure solvability is that it allows us to use closure operators which do not arise from digraphs (notably the uniform matroids) but which have been proved to be solvable over many alphabets. In this paper, we introduce the concept of the entropy of an arbitrary closure operator. Again, calculating the entropy of a closure operator determines whether this closure operator is solvable or not. Therefore, this paper aims at studying this quantity in detail. 

The rest of the paper is organised as follows. We review the closure operator associated to a digraph and the general closure solvability problem in Section \ref{sec:preliminaries}. In Section \ref{sec:ranks}, we introduce four kinds of rank functions for a given closure operator. This not only helps us derive bounds on the entropy, but we are also able to provide axioms for matroids that are, up to the author's knowledge, new. Section \ref{sec:Shannon} then studies a natural upper bound on the entropy, based on polymatroids. This helps us prove that the set of closure entropies contains all rational numbers above 1. Finally, Section \ref{sec:beyond} investigates the solvability problem beyond closure operators.

\section{Preliminaries} \label{sec:preliminaries}

\subsection{Closure operators} \label{sec:closure}

Throughout this paper, $V$ is a set of $n$ elements. A closure operator on $V$ is a mapping $\cl: 2^V \to 2^V$ which satisfies the following properties \cite[Chapter IV]{Bir48}. For any $X,Y \subseteq V$,
\begin{enumerate}
    \item \label{it:Xincl} $X \subseteq \cl(X)$ (extensive);

    \item \label{it:cl_increasing} if $X \subseteq Y$, then $\cl(X) \subseteq \cl(Y)$ (isotone);

    \item \label{it:cl(cl)} $\cl(\cl(X)) = \cl(X)$ (idempotent).
\end{enumerate}
A {\em closed set} is a set equal to its closure. For instance, in a group one may define the closure of a set as the subgroup generated by the elements of the set; the family of closed sets is simply the family of all subgroups of the group. Another example is given by linear spaces, where the closure of a set of vectors is the subspace they span. Closure operators are central in lattice theory and in universal algebra; moreover, any Galois connection is equivalent to a closure operator.

%
%
%

We refer to
$$
    r:= \min\{|b| : \cl(b) = V\}
$$
as the {\em rank} of the closure operator. Any set $b \subseteq V$ of size $r$ and whose closure is $V$ is referred to as a {\em basis} of $\cl$. There is a natural partial order on closure operators of the same set. We denote $\cl_1 \le \cl_2$ if for all $X \subseteq V$, $\cl_1(X) \subseteq \cl_2(X)$; then $r(\cl_1) \ge r(\cl_2)$.

We shall focus on two families of closure operators. Firstly, a {\em matroid} is a closure operator satisfying the Steinitz-Mac Lane exchange property\footnote{In order simplify notation, we shall identify a singleton $\{v\}$ with its element $v$}: if $X \subseteq V$, $v \in V$, and $u \in \cl(X \cup v) \backslash \cl(X)$, then $v \in \cl(X \cup u)$ \cite{Oxl06}. In particular, the uniform matroid $U_{r,n}$ of rank $r$ over $n$ vertices is defined by
$$
	U_{r,n}(X) = \begin{cases}
	V & \mbox{if } |X| \ge r\\
	X & \mbox{otherwise.}
	\end{cases}
$$
Secondly, let $D = (V,E)$ be a digraph on $n$ vertices (possibly with loops, but without any repeated arcs). The $D$-closure of any $X \subseteq V$ is given by $\cl_D(X) := X \cup Y$, where $Y$ is the largest acyclic set of vertices such that $Y^- \subseteq X \cup Y$ \cite{Gad13}. The $D$-closure is well defined, see \cite{Gad13} for an alternative definition. Recall that a {\em feedbak vertex set} is a set of vertices $X$ such that $V \backslash X$ induces an acyclic subgraph. The rank of $\cl_D$ is therefore the minimum size of a feedback vertex set of $D$.

\begin{example} \label{ex:clD}
The $D$-closure of some classes of graphs can be readily determined.
\begin{enumerate}
	\item If $D$ is acyclic, then $\cl_D = U_{0,n}$.
	
	\item If $D = C_n$ , the directed cycle, then $\cl_D = U_{1,n}$.
	
	\item If $D = K_n$, the complete graph, then $\cl_D = U_{n-1,n}$.
	
	\item If $D$ has a loop on each vertex, then $\cl_D = U_{n,n}$.
\end{enumerate}
Conversely, no other uniform matroid can be viewed as a $D$-closure.
\end{example}

\subsection{Partitions} \label{sec:functions}

A partition of a finite set $B$ is a collection of subsets, called parts, which are pairwise disjoint and whose union is the whole of $B$. We denote the parts of a partition $f$ as $P_i(f)$ for all $i$. If any part of $f$ is contained in a unique part of $g$, we say $f$ {\em refines} $g$. The equality partition $E_B$ with $|B|$ parts refines any other partition, while the universal partition (with only one part) is refined by any other partition of $B$. The {\em common refinement} of two partitions $f$, $g$ of $B$ is given by $h:= f \vee g$ with parts
$$
    P_{i,j}(h) = P_i(f) \cap P_j(g): P_i(f) \cap P_j(g) \ne \emptyset.
$$

We shall usually consider a tuple of $n$ partitions $f= (f_1,\ldots,f_n)$ of the same set assigned to elements of a finite set $V$ with $n$ elements. In that case, for any $X \subseteq V$, we denote the common refinement of all $f_v, v \in X$ as $f_X := \bigvee_{v \in X} f_v$. For any $X,Y \subseteq V$ we then have $f_{X \cup Y} = f_X \vee f_Y$.

\subsection{Closure solvability and entropy}

We now review the closure solvability problem \cite{Gad13}. The instance of the problem consists of a closure operator $\cl$ on $V$ with rank $r$, and a finite set $A$ with $|A| \ge 2$, referred to as the {\em alphabet}.

\begin{definition}
A {\em coding function} for $\cl$ over $A$ is a family $f$ of $n$ partitions of $A^r$ into at most $|A|$ parts such that $f_X = f_{\cl(X)}$ for all $X \subseteq V$.
\end{definition}

The problem is to determine whether there exists a coding function for $\cl$ over $A$ such that $f_V$ has $A^r$ parts. That is, we want to determine whether there exists an $n$-tuple $f = (f_1,\ldots,f_n)$ of partitions of $A^r$ in at most $|A|$ parts such that
\begin{align*}
	f_X &= f_{\cl(X)} \quad \mbox{for all } X \subseteq V,\\
	f_V &= E_{A^r}.
\end{align*}

We make several remarks concerning the closure solvability problem.
\begin{enumerate}
	\item Closures associated to digraphs are particularly relevant for network coding. Indeed, a network coding instance is solvable if and only if $\cl_D$ is solvable for some digraph $D$ related to the network coding instance \cite{Gad13}.

	\item When reduced to matroids, this is the problem of representation by partitions in \cite{Mat99}, which is equivalent to determining secret-sharing matroids \cite{BD91}.

	\item The solvability problem could be defined as searching for families of partitions of any set $B$ with $|B| \ge |A|^r$ such that $f_V = E_B$. However, this can only occur if $|B| = |A|^r$; moreover, since only the cardinality of $B$ matters, we can assume without loss that $B = A^r$.

	\item A coding function $f$ naturally yields a closure operator $\cl_f$ on $V$, where $\cl_f(X) = \{v \in V: f_{X \cup v} = f_X\} =  \bigcup \{Y : f_Y = f_X\}$; we then have $\cl \le \cl_f$. Therefore, if $\cl_2$ is solvable, then any $\cl_1$ with the same rank and $\cl_1 \le \cl_2$ is also solvable.
\end{enumerate}

For any partition $g$ of $A^r$, we define its entropy as
$$
	H(g) := r - |A|^{-r} \sum_i |P_i(g)| \log_{|A|} |P_i(g)|.
$$
The equality partition on $A^r$ is the only partition with full entropy $r$. Denoting $H_f(X) := H(f_X)$, we can recast the conditions above as
\begin{align*}
	H_f(v) &\le 1 \quad \mbox{for all } v \in V,\\
	H_f(X) &= H_f(\cl(X)) \quad \mbox{for all } X \subseteq V,\\
	H_f(V) &= r.
\end{align*}
The first two conditions are equivalent to $f$ being a coding function. In general, the rank cannot always be attained, hence we define the entropy of a closure operator $\cl$ over $A$ as the maximum entropy of any coding function for it:
$$
	H(\cl,A) := \max \{H_f(V) : f \mbox{ coding function for } \cl \mbox{ over } A\}.
$$
The entropy of $\cl$ is defined to be the supremum of all $H(\cl,A)$.

\section{Rank functions of closure operators} \label{sec:ranks}

In this section, we investigate the properties of closure operators in general and we derive bounds on the entropy of their coding functions. We shall introduce four kinds of ranks for any closure operator. It is worth noting that they are all distinct from the so-called rank function of a closure operator studied in \cite{Bat84}.

\subsection{Inner and outer ranks} \label{sec:irk_ork}

First of all, we are interested in upper bounds on the entropy of coding functions.

\begin{definition}
The {\em inner rank} and {\em outer rank} of a subset $X$ of vertices are respectively given by
\begin{align*}
    \irk(X) &:= \min\{|b|: \cl(X) = \cl(b)\}\\
    \ork(X) &:= \min\{|b| : X \subseteq \cl(b)\} = \min\{|b|: \cl(X) \subseteq \cl(b)\}.
\end{align*}
\end{definition}

Although the notations should reflect which closure operator is used in order to be rigorous, we shall usually omit this dependence for the sake of clarity. Instead, if the closure operator is ``decorated'' by subscripts or superscripts, then the corresponding parameters will be decorated in the same fashion.

A set $i$ with $|i| = \irk(X)$ and $\cl(i) = \cl(X)$ is called an {\em inner basis} of $X$; similarly a set $o$ with $|o| = \ork(X)$ and $\cl(X) \subseteq \cl(o)$ is called an {\em outer basis} of $X$.

The following properties are an easy exercise.

\begin{proposition} \label{prop:irk_ork}
For any $X,Y \subseteq V$,
\begin{enumerate}
    \item \label{it:rk(cl)} $\ork(\cl(X)) = \ork(X)$ and $\irk(\cl(X)) = \irk(X)$;

    \item \label{it:rk_scaled} $\ork(X) \le \irk(X) \le |X|$;

    \item \label{it:rk(cup)} $\ork(X \cup Y) \le \ork(X) + \ork(Y)$ and $\irk(X \cup Y) \le \irk(X) + \irk(Y)$;

    \item \label{it:rk(V)} $\ork(\emptyset) = \irk(\emptyset) = 0$ and $\ork(V) = \irk(V) = r$;

    \item \label{it:rk_increasing} if $X \subseteq Y$, then $\ork(X) \le \ork(Y)$.
\end{enumerate}
\end{proposition}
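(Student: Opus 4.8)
The plan is to reduce every assertion to a single absorption principle, then to exhibit the right candidate set $b$ in each defining minimum. I first record the observation that for any $A, B \subseteq V$ one has $\cl(A) \subseteq \cl(B)$ if and only if $A \subseteq \cl(B)$: the forward direction is immediate from extensivity, while the converse follows by applying isotonicity to $A \subseteq \cl(B)$ and then idempotence, $\cl(A) \subseteq \cl(\cl(B)) = \cl(B)$. This is exactly what makes the two displayed formulas for $\ork$ agree, and it is the workhorse for the rest. With it in hand, each claim becomes a one-line choice of witness.

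For Part \ref{it:rk(cl)}, since $\cl(\cl(X)) = \cl(X)$, the conditions ``$\cl(b) = \cl(\cl(X))$'' and ``$\cl(\cl(X)) \subseteq \cl(b)$'' read verbatim as ``$\cl(b) = \cl(X)$'' and ``$\cl(X) \subseteq \cl(b)$'', so the two minima coincide with $\irk(X)$ and $\ork(X)$. For Part \ref{it:rk_scaled}, the set $b = X$ satisfies $\cl(b) = \cl(X)$, giving $\irk(X) \le |X|$; and if $i$ is an inner basis of $X$, then $\cl(X) = \cl(i) \subseteq \cl(i)$, so $i$ is admissible for the outer-rank minimum and $\ork(X) \le |i| = \irk(X)$. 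For Part \ref{it:rk(V)}, the empty set is its own inner (hence outer) basis, so both ranks vanish; and because $\cl(V) = V$, the conditions $\cl(b) = \cl(V)$ and $V \subseteq \cl(b)$ both reduce to $\cl(b) = V$, whose minimal witness has size $r$ by definition of the rank. For Part \ref{it:rk_increasing}, an outer basis $o$ of $Y$ satisfies $X \subseteq Y \subseteq \cl(o)$, so $o$ witnesses $\ork(X) \le |o| = \ork(Y)$.

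The only part requiring genuine work is the subadditivity in Part \ref{it:rk(cup)}, and even there the outer case is immediate: choosing outer bases $o_X, o_Y$ of $X, Y$, isotonicity gives $X \subseteq \cl(o_X) \subseteq \cl(o_X \cup o_Y)$ and likewise $Y \subseteq \cl(o_X \cup o_Y)$, so $o_X \cup o_Y$ is an outer-basis candidate for $X \cup Y$ of size at most $\ork(X) + \ork(Y)$. For the inner case I would take inner bases $i_X, i_Y$ and prove $\cl(i_X \cup i_Y) = \cl(X \cup Y)$ by applying the absorption principle twice: from $i_X \subseteq \cl(X) \subseteq \cl(X \cup Y)$ and $i_Y \subseteq \cl(Y) \subseteq \cl(X \cup Y)$ one gets $\cl(i_X \cup i_Y) \subseteq \cl(X \cup Y)$, and symmetrically from $X \subseteq \cl(i_X) \subseteq \cl(i_X \cup i_Y)$ and $Y \subseteq \cl(i_Y) \subseteq \cl(i_X \cup i_Y)$ one gets the reverse inclusion. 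Hence $i_X \cup i_Y$ is an inner-basis candidate of size at most $\irk(X) + \irk(Y)$.

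The genuine, if mild, obstacle is thus the two-sided argument for inner-rank subadditivity, where one must verify an equality of closures rather than a single inclusion; the outer-rank statements never need more than isotonicity. It is also worth emphasising what is deliberately absent: no monotonicity is claimed for $\irk$, and indeed inner rank need not be isotone, which is why Part \ref{it:rk_increasing} is stated for $\ork$ only. Accordingly I would keep the candidate-set arguments for the two ranks separate rather than forcing a common treatment.
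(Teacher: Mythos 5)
Your proof is correct in every part: the absorption principle ($\cl(A) \subseteq \cl(B)$ iff $A \subseteq \cl(B)$) is established properly from the three closure axioms, and each rank inequality follows from a valid choice of witness set, including the two-sided closure equality needed for inner-rank subadditivity. The paper gives no proof at all --- it states the proposition as ``an easy exercise'' --- and your argument is precisely the standard witness-based verification the author intended, so there is nothing to contrast.
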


The closure of the empty set is the only closed set of (inner and outer) rank $0$, while $V$ is not necessarily the unique closed set of (inner or outer) rank $r$.

Note that the inner rank is not monotonic, as seen in the example in Figure \ref{fig:rk'}. We have $\cl_D(4) = V$ and hence $\irk_D(V) = 1$, while $\irk_D(123) = 2$ for $\cl_D(12) = 123$ while $\cl_D(v) = v$ for any $v \in 123$.

\begin{figure}
\begin{center}
\begin{tikzpicture}
    \tikzstyle{every node}=[draw,shape=circle];

	\node (1) at (3,2) {1};
	\node (2) at (3,0) {2};
	\node (3) at (4,1) {3};
	\node (4) at (2,1) {4};
	\node (5) at (0,1) {5};

    \draw[-latex] (1) -- (3);
    \draw[-latex] (2) -- (3);
    \draw[-latex] (3) -- (4);
    \draw[latex-latex] (4) -- (5);
    \draw[-latex] (4) -- (1);
    \draw[-latex] (4) -- (2);
\end{tikzpicture}
\end{center}
\caption{Example where the inner rank is not monotonic.} \label{fig:rk'}
\end{figure}
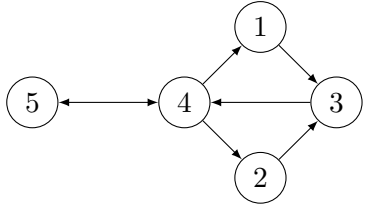

If $\cl_1(X) \subseteq \cl_2(X)$ for some $X$, then $\ork_1(X) \ge \ork_2(X)$. Indeed, any outer basis of $X$ with respect to $\cl_1$ is also an outer basis of $X$ with respect to $\cl_2$. In particular, if $\cl_1 \le \cl_2$, then $\ork_1(X) \ge \ork_2(X)$ for all $X$.

\begin{lemma} \label{lem:G<ork}
Let $G: 2^V \to \mathbb{R}$ satisfying $0 \le G(X) \le |X|$ and $G(\cl(X)) = G(X)$ for all $X \subseteq V$.  Then $G(X) \le \irk(X)$ for all $X$. Also, if $X \subseteq Y$ implies $G(X) \le G(Y)$, then $G(X) \le \ork(X)$ for all $X$.
\end{lemma}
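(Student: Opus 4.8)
The plan is to prove both inequalities by the same device: the invariance hypothesis $G(\cl(X)) = G(X)$ lets me replace any set by a basis of its closure without changing the value of $G$, and then the box bound $G(b) \le |b|$ converts the cardinality of that basis into an upper bound. The two statements differ only in how tightly the relevant basis relates to $\cl(X)$, and correspondingly in whether monotonicity of $G$ is needed.

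For the inner rank bound, I would fix an inner basis $i$ of $X$, so that $|i| = \irk(X)$ and $\cl(i) = \cl(X)$. Applying the invariance hypothesis twice gives $G(X) = G(\cl(X)) = G(\cl(i)) = G(i)$, and the upper box bound then yields $G(i) \le |i| = \irk(X)$. Chaining these gives $G(X) \le \irk(X)$. Note that this argument uses only the invariance and the inequality $G(b) \le |b|$; monotonicity plays no role here precisely because an inner basis satisfies the \emph{equality} $\cl(i) = \cl(X)$.

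For the outer rank bound, I would instead fix an outer basis $o$ of $X$, with $|o| = \ork(X)$ and $\cl(X) \subseteq \cl(o)$. Here I only have a containment of closures rather than an equality, so I invoke the extra monotonicity hypothesis: from $\cl(X) \subseteq \cl(o)$ it follows that $G(\cl(X)) \le G(\cl(o))$. Combining this with the invariance hypothesis on both ends and the box bound gives the chain $G(X) = G(\cl(X)) \le G(\cl(o)) = G(o) \le |o| = \ork(X)$, which is the desired inequality.

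There is no genuine obstacle in this lemma; the content is entirely in bookkeeping. The one point that deserves care is to keep straight which hypothesis does which job: the invariance $G(\cl(X)) = G(X)$ is what allows passage to a basis, the box bound $G \le |\cdot|$ is what produces the numerical estimate, and monotonicity is needed \emph{only} for the outer rank, since an outer basis guarantees containment but not equality of closures. I would present the two parts in that order, sharing the common first step.
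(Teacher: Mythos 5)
Your proof is correct and follows essentially the same route as the paper's: pass to an inner basis via the invariance $G(\cl(X)) = G(X)$ for the first bound, and to an outer basis, using monotonicity to handle the containment $\cl(X) \subseteq \cl(o)$, for the second. The only cosmetic difference is that the paper applies monotonicity directly to $X \subseteq \cl(o)$ rather than first replacing $X$ by $\cl(X)$; the argument is the same.
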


\begin{proof}
First, if $i$ is an inner basis of $X$, then $G(X) = G(\cl(i)) = G(i) \le |i| = \irk(X)$. Second, if $o$ is an outer basis of $X$, $G(X) \le G(\cl(o)) = G(o) \le |o| = \ork(X)$.
\end{proof}

This Lemma proves that we get subadditivity for free. Since the entropy satisfies all the conditions of Lemma \ref{lem:G<ork}, we obtain an upper bound on the entropy. 

\begin{corollary} \label{cor:H<ork}
For any coding function $f$ and any $X \subseteq V$, $H_f(X) \le \ork(X)$.
\end{corollary}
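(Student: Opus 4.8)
The plan is to apply Lemma \ref{lem:G<ork} to the function $G = H_f$, so the task reduces to verifying that $H_f$ meets the three hypotheses of that lemma: the bounds $0 \le H_f(X) \le |X|$, the closure-invariance $H_f(\cl(X)) = H_f(X)$, and isotonicity. The closure-invariance is immediate, being nothing more than the defining property of a coding function: from $f_X = f_{\cl(X)}$ we get $H_f(X) = H(f_X) = H(f_{\cl(X)}) = H_f(\cl(X))$.

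The substance lies in three standard entropy facts, which I would first establish in the present partition formalism. Writing $N = |A|^r$ and $p_i = |P_i(g)|/N$, a short rearrangement shows $H(g) = -\sum_i p_i \log_{|A|} p_i$; that is, $H(g)$ is exactly the Shannon entropy (in base $|A|$) of the uniform distribution on $A^r$ read through the partition $g$. From this identification I extract: (a) non-negativity, $H(g) \ge 0$; (b) monotonicity under refinement, namely $H(g) \ge H(h)$ whenever $g$ refines $h$; and (c) subadditivity, $H(f \vee g) \le H(f) + H(g)$.

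With these in hand the remaining hypotheses follow by bookkeeping. For isotonicity, if $X \subseteq Y$ then $f_Y = f_X \vee f_{Y \setminus X}$ refines $f_X$, so (b) gives $H_f(X) \le H_f(Y)$, while (a) gives $H_f(X) \ge 0$. For the upper bound, iterating (c) yields $H_f(X) = H\left(\bigvee_{v \in X} f_v\right) \le \sum_{v \in X} H(f_v)$; since each $f_v$ has at most $|A|$ parts, $H(f_v) \le \log_{|A|}|A| = 1$, whence $H_f(X) \le |X|$. All hypotheses of Lemma \ref{lem:G<ork} being met, its isotone case delivers $H_f(X) \le \ork(X)$.

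The main obstacle is not conceptual but the re-derivation of the classical entropy inequalities—subadditivity and monotonicity under refinement—inside this partition-based definition of $H$. Once the identification with $-\sum_i p_i \log_{|A|} p_i$ is made, these are standard, and they are the only non-formal input; everything else is a direct translation of the coding-function axioms into the language of Lemma \ref{lem:G<ork}.
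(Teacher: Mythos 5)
Your proposal is correct and follows exactly the paper's route: the paper also obtains this corollary by applying Lemma \ref{lem:G<ork} to $G = H_f$, merely asserting that ``the entropy satisfies all the conditions'' of that lemma. Your verification of those conditions---closure-invariance from the coding-function axiom, isotonicity from refinement monotonicity of entropy, and $H_f(X) \le |X|$ from subadditivity together with $H(f_v) \le 1$---is precisely the bookkeeping the paper leaves implicit, and it is carried out correctly.
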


\subsection{Flats and span} \label{sec:flats}

Before we move on to lower bounds on the entropy, we define two fundamental concepts.

\begin{definition} \label{def:flat}
A {\em flat} is a subset $F$ of vertices for which there is no $X \supset F$ with $\ork(X) = \ork(F)$.
\end{definition}

\begin{proposition} \label{prop:flat}
Flats satisfy the following properties.
\begin{enumerate}
	\item \label{it:flat_V} $\cl(\emptyset)$ is the only flat with rank $0$, and $V$ is the only flat with rank $r$.

    \item \label{it:flat_closed} Any flat $F$ is a closed set;

    \item \label{it:rk(F)} $\ork(F) = \irk(F)$;

    \item \label{it:flat_rank} for any $X$, there exists a flat $F \supseteq X$ with $\ork(F) = \ork(X)$.
\end{enumerate}
\end{proposition}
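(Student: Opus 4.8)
The plan is to prove the four parts in order of their logical dependency rather than in the stated order. Part \ref{it:flat_closed} is foundational, so I would establish it first; parts \ref{it:flat_V} and \ref{it:rk(F)} then follow from it together with the defining maximality of flats, while part \ref{it:flat_rank} is an independent finiteness argument. Two facts from Proposition \ref{prop:irk_ork} will be used throughout: the invariance $\ork(\cl(X)) = \ork(X)$ and the monotonicity $X \subseteq Y \Rightarrow \ork(X) \le \ork(Y)$.

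For part \ref{it:flat_closed}, note that $F \subseteq \cl(F)$ by extensiveness, while $\ork(\cl(F)) = \ork(F)$. If $F$ were not closed, then $\cl(F)$ would be a strict superset of $F$ with $\ork(\cl(F)) = \ork(F)$, contradicting the definition of a flat. Hence $\cl(F) = F$.

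The heart of the proposition, and the step I expect to be the main obstacle, is part \ref{it:rk(F)}. Since $\ork(F) \le \irk(F)$ holds for every set, I only need $\irk(F) \le \ork(F)$. Let $o$ be an outer basis of $F$, so $|o| = \ork(F)$ and $\cl(F) \subseteq \cl(o)$. Using $F = \cl(F)$ from part \ref{it:flat_closed}, we get $F \subseteq \cl(o)$; moreover $\ork(\cl(o)) = \ork(o) \le |o| = \ork(F)$, while monotonicity gives $\ork(\cl(o)) \ge \ork(F)$, so $\ork(\cl(o)) = \ork(F)$. Because $\cl(o) \supseteq F$ carries the same outer rank as $F$, flatness forbids it from being a strict superset, forcing $\cl(o) = F = \cl(F)$. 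Thus $o$ is in fact an inner basis of $F$, giving $\irk(F) \le |o| = \ork(F)$. The delicate point is exactly this interplay: one must combine closedness (part \ref{it:flat_closed}) with the maximality encoded in the flat property in order to upgrade an outer basis into an inner basis. Part \ref{it:flat_V} is then quick: a flat $F$ with $\ork(F) = 0$ satisfies $\cl(F) \subseteq \cl(\emptyset)$, hence $\cl(F) = \cl(\emptyset)$ by isotonicity, and closedness gives $F = \cl(\emptyset)$; conversely every set of outer rank $0$ lies in $\cl(\emptyset)$, so $\cl(\emptyset)$ is the maximal such set and is itself a flat. For rank $r$, if a flat $F \ne V$ existed, then $V \supsetneq F$ with $\ork(V) = r = \ork(F)$ would contradict flatness, while $V$ is trivially a flat since it has no proper superset.

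Finally, part \ref{it:flat_rank} is a routine extremal argument. The collection $\{Y \subseteq V : X \subseteq Y,\ \ork(Y) = \ork(X)\}$ is nonempty (it contains $X$) and finite, so it admits an inclusion-maximal element $F$. Any $Z \supset F$ then satisfies $Z \supseteq X$ but, by maximality, $\ork(Z) \ne \ork(X) = \ork(F)$; hence $F$ is a flat, and by construction $F \supseteq X$ with $\ork(F) = \ork(X)$. Note that uniqueness of the maximal element is not required, so no further care is needed here.
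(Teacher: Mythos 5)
Your proof is correct and follows essentially the same route as the paper's: closedness of flats via $\ork(\cl(F)) = \ork(F)$, the upgrade of an outer basis of a flat to an inner basis via the flatness maximality, and an extremal (maximal-element) choice for part \ref{it:flat_rank}. The only differences are cosmetic — you reorder the parts, spell out part \ref{it:flat_V} (which the paper dismisses as trivial), and use inclusion-maximality where the paper uses maximum cardinality.
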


\begin{proof}
\ref{it:flat_V} is trivial.

\ref{it:flat_closed}. Since $\cl(F)$ contains $F$ while having the same rank as $F$, it cannot properly contain $F$.

\ref{it:rk(F)}. Let $o$ be an outer basis of $F$. Since $F \subseteq \cl(o)$ while $\ork(F) = \ork(\cl(o))$, we obtain $F = \cl(o)$ and $o$ is an inner basis of $F$.

\ref{it:flat_rank}. For any $X$, let $C$ be a set with rank $\ork(X)$ and containing $X$ of largest cardinality, then there exists no $G$ such that $C \subset G$ and $\ork(G) = \ork(X) = \ork(C)$.
\end{proof}

It is worth noting that there are closed sets which are not flats. For example, consider the following closure operator on $V = \{1,\ldots,n\}$, where $\cl(X) = \{1,\ldots,\max(X)\}$. Then it has rank $1$ and hence only two flats (the empty set and $V$), while it has $n+1$ closed sets (the empty set and $\cl(i)$ for all $i$). We shall clarify the relationship between closed sets and flats below.

\begin{definition} \label{def:span}
For any $X \subseteq V$, the union of all flats containing $X$ with outer rank equal to that of $X$ is referred to as the {\em span} of $X$, i.e.
$$
	\spancl(X) := \bigcup \{F: F \,\mbox{flat}, X \subseteq F, \ork(F) = \ork(X)\}.
$$
\end{definition}

\begin{proposition} \label{prop:span}
For any $X$,
\begin{enumerate}	
	\item  $\cl(X) \subseteq \spancl(X)$ with equality if and only if $\cl(X)$ is a flat;

	\item $\spancl(\cl(X)) = \spancl(X)$;
	
	\item $\spancl(X) := \{v \in V: \ork(X \cup v) = \ork(X)\}$.
\end{enumerate}
\end{proposition}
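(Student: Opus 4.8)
The plan is to reduce all three parts to a single structural fact, Proposition \ref{prop:flat}.\ref{it:flat_rank}, that every set admits a flat extension of the same outer rank, combined with the monotonicity of the outer rank (Proposition \ref{prop:irk_ork}.\ref{it:rk_increasing}), its invariance under closure (Proposition \ref{prop:irk_ork}.\ref{it:rk(cl)}), and the fact that every flat is closed (Proposition \ref{prop:flat}.\ref{it:flat_closed}). Throughout, the defining union for $\spancl(X)$ is over a nonempty family, since Proposition \ref{prop:flat}.\ref{it:flat_rank} furnishes at least one flat $F \supseteq X$ with $\ork(F) = \ork(X)$.

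For the first part, I would first observe that every flat $F$ in that family is closed and contains $X$, hence contains $\cl(X)$; taking the union over the family yields $\cl(X) \subseteq \spancl(X)$. For the equality criterion, if $\cl(X)$ is itself a flat then it belongs to the family (its rank equals $\ork(X)$ by Proposition \ref{prop:irk_ork}.\ref{it:rk(cl)}), and since no flat properly containing $\cl(X)$ can share its rank, every member of the family must equal $\cl(X)$, forcing $\spancl(X) = \cl(X)$. Conversely, if $\cl(X)$ is not a flat, I would take a witness $Y \supsetneq \cl(X)$ with $\ork(Y) = \ork(\cl(X)) = \ork(X)$, extend it to a flat $F$ of the same rank, and note that $F$ lies in the family while $F \supseteq Y \supsetneq \cl(X)$, so $\spancl(X) \ne \cl(X)$.

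The remaining two parts are direct. For the second, invariance of the outer rank under closure makes the rank conditions in the two defining unions identical, and for a closed set $F$ the conditions $X \subseteq F$ and $\cl(X) \subseteq F$ are equivalent; hence the two families of flats coincide and $\spancl(\cl(X)) = \spancl(X)$. For the third, write $S := \{v : \ork(X \cup v) = \ork(X)\}$. If $v \in \spancl(X)$, then $v$ lies in some flat $F$ of the family, so $X \cup v \subseteq F$ and the squeeze $\ork(X) \le \ork(X \cup v) \le \ork(F) = \ork(X)$ gives $v \in S$. If instead $v \in S$, then Proposition \ref{prop:flat}.\ref{it:flat_rank} applied to $X \cup v$ produces a flat $F \supseteq X \cup v$ with $\ork(F) = \ork(X \cup v) = \ork(X)$, which belongs to the family and contains $v$, so $v \in \spancl(X)$.

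I expect the only nonmechanical step to be the ``only if'' direction of the equality in the first part: it is the one place where one must argue by contradiction and invoke the definition of a flat directly, rather than simply squeeze ranks or compare the defining families. The rest amounts to bookkeeping with the monotonicity and closure-invariance of $\ork$ together with the flat-extension property.
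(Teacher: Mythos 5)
Your proposal is correct and takes essentially the same approach as the paper: part 3 is proved exactly as in the paper (a monotonicity squeeze in one direction, the flat-extension property of Proposition \ref{prop:flat} in the other). The paper dismisses parts 1 and 2 as following directly from the definition; your fleshed-out arguments for them use precisely the ingredients the paper's framework supplies (flats are closed, $\ork(\cl(X)) = \ork(X)$, and the flat-extension property for nonemptiness and for the ``only if'' direction), so there is no genuine divergence.
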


\begin{proof}
The first two properties follow directly from the definition. Suppose $v \in F$, a flat containing $X$ with $\ork(F) = \ork(X)$, then $\ork(X \cup v) \le \ork(F) = \ork(X)$. Conversely, if $\ork(X \cup w) = \ork(X)$, then $X \cup w$ is contained in a flat with the same outer rank as $X$, and hence in $\spancl(X)$.
\end{proof}

Flats and spans provide two alternate axioms for matroids.

\begin{theorem} \label{th:matroid}
The following are equivalent:
\begin{enumerate}
	\item $\cl$ is a matroid;

	\item all closed sets are flats, i.e. $\cl(X) = \spancl(X)$ for all $X \subseteq V$;

	\item all closed sets are spans, i.e. for all $X \subseteq V$, there exists $Y \subseteq V$ such that $\cl(X) = \spancl(Y)$.
\end{enumerate}
\end{theorem}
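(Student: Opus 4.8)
The plan is to prove the chain of implications $1 \Rightarrow 2 \Rightarrow 3 \Rightarrow 1$, with the bulk of the work living in $1 \Rightarrow 2$ and $3 \Rightarrow 1$. The implication $2 \Rightarrow 3$ is immediate, since if every closed set is a flat, then $\cl(X) = \spancl(X)$ exhibits each closed set as a span (take $Y = X$).

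For $1 \Rightarrow 2$, assume $\cl$ is a matroid and show every closed set is a flat. By Proposition \ref{prop:span}, it suffices to show $\spancl(X) \subseteq \cl(X)$ for all $X$, i.e. that $\ork(X \cup v) = \ork(X)$ forces $v \in \cl(X)$. The natural route is to pass to an inner basis $b$ of $X$, so $|b| = \irk(X) = \ork(X)$ (flats and matroid bases coincide in rank) and $\cl(b) = \cl(X)$. If $v \notin \cl(X) = \cl(b)$, I want to argue that $b \cup v$ is independent, so that $\ork(X \cup v) = |b| + 1 > \ork(X)$, a contradiction. Here the exchange property is the key tool: in a matroid, $v \notin \cl(b)$ together with the fact that $b$ is a minimal spanning set of $\cl(b)$ guarantees that no element of $b \cup v$ lies in the closure of the rest, which is exactly independence of $b \cup v$. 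I then need the standard matroid fact that the outer rank of an independent set equals its cardinality; this should follow from the exchange property via Proposition \ref{prop:irk_ork}, since any smaller spanning set would let me exchange elements and contradict minimality.

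For the converse $3 \Rightarrow 1$, I assume every closed set is a span and verify the Steinitz--Mac Lane exchange property directly. Take $X \subseteq V$, $v \in V$, and $u \in \cl(X \cup v) \setminus \cl(X)$; I must show $v \in \cl(X \cup u)$. The idea is to compare outer ranks. Since $u \notin \cl(X)$ but $u \in \cl(X \cup v)$, adding $v$ strictly enlarges the closure, and I expect $\ork(X \cup v) = \ork(X) + 1$ while $\ork(X) \le \ork(X \cup u) \le \ork(X \cup v)$. If I can show $\ork(X \cup u) = \ork(X) + 1 = \ork(X \cup u \cup v)$, then by the span characterization in Proposition \ref{prop:span} applied to the \emph{flat} (or span) containing $X \cup u$, membership $v \in \spancl(X \cup u) = \cl(X \cup u)$ follows, using hypothesis 3 to identify $\cl(X \cup u)$ with a span.

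\textbf{The main obstacle} I anticipate is the converse direction, specifically translating the hypothesis ``every closed set is a span'' into usable rank identities. The subtlety is that a span $\spancl(Y)$ may have outer rank strictly larger than the $X$ we care about, so I cannot naively assume $\cl(X) = \spancl(X)$; I must instead deduce from hypothesis 3 that $\cl(X)$, being \emph{some} span $\spancl(Y)$, inherits the property $\cl(X) = \{w : \ork(\cl(X) \cup w) = \ork(\cl(X))\}$, and then combine this with the additivity bounds of Proposition \ref{prop:irk_ork}\ref{it:rk(cup)} to pin down the outer ranks above. Establishing that $u \notin \cl(X)$ genuinely increases the outer rank by exactly one, rather than merely weakly, is the delicate point, and I expect it to rest on showing $\cl(X)$ is itself a flat under hypothesis 3.
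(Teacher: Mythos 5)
Your implications $1 \Rightarrow 2$ and $2 \Rightarrow 3$ are fine: $2 \Rightarrow 3$ is indeed trivial, and your route for $1 \Rightarrow 2$ (pass to an inner basis $b$, use the exchange axiom to show $b \cup v$ is independent when $v \notin \cl(b)$, then invoke the Steinitz exchange lemma to conclude that an independent set cannot lie in the closure of a smaller set) is standard matroid theory and can be made rigorous, though it is heavier than the paper's treatment, which dispatches the matroid direction as known and instead proves $2 \Rightarrow 1$ by a short rank comparison. The genuine gap is in your $3 \Rightarrow 1$. Every step of the rank bookkeeping there --- that $u \notin \cl(X)$ forces $\ork(X \cup u) = \ork(X)+1$, and that $\ork(X \cup u \cup v) = \ork(X \cup u)$ forces $v \in \cl(X \cup u)$ --- is equivalent to the assertion that the closed sets $\cl(X)$ and $\cl(X \cup u)$ are \emph{flats}. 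That assertion is precisely the implication $3 \Rightarrow 2$, which is the real content of the theorem, and your proposal never proves it: you write only that you ``must deduce'' it and ``expect'' it to hold. Hypothesis 3 says merely that $\cl(X) = \spancl(Y)$ for \emph{some} $Y$, and nothing in Proposition \ref{prop:irk_ork} (monotonicity, subadditivity) prevents a span from being properly contained in a set of the same outer rank. Indeed, for a general closure operator, $u \notin \cl(X)$ is perfectly compatible with $\ork(X \cup u) = \ork(X)$: this happens exactly when $\cl(X) \subset \spancl(X)$, as in the operator $\cl(X) = \{1,\ldots,\max(X)\}$ discussed after Proposition \ref{prop:flat}. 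So your plan reduces the hard implication to itself.

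The paper fills exactly this hole with a global induction on the outer rank, and that is the idea missing from your proposal. Assuming every closed set of outer rank at most $k-1$ is a flat, take a minimal closed set $c$ of outer rank $k$; by hypothesis 3, $c = \spancl(Y)$ for some $Y \subseteq c$. If $\cl(Y) \subset c$ properly, then $\cl(Y)$ is a closed set of smaller outer rank, hence a flat by induction, hence $\cl(Y) = \spancl(\cl(Y)) = \spancl(Y) = c$ by Proposition \ref{prop:span}, a contradiction; so $c = \cl(Y)$, whence $\spancl(c) = \spancl(Y) = c$ and $c$ is a flat (non-minimal closed sets of rank $k$ then cannot exist, since they would properly contain such a flat of the same rank). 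Once $3 \Rightarrow 2$ is available, the exchange property follows by exactly the rank comparison you sketch --- that is the paper's proof of $2 \Rightarrow 1$ --- so your cycle would in effect collapse to the paper's $3 \Rightarrow 2 \Rightarrow 1$. The induction is the piece you still need to supply; no local argument from the exchange configuration alone, nor any combination of the bounds in Proposition \ref{prop:irk_ork}, can replace it.
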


\begin{proof}
The first property clearly implies the third one. Let us now prove that the second property implies the first one. Let $X \subseteq V$, $v \in V$ and $u \in \cl(X \cup v) \backslash \cl(X)$, then $\ork(X \cup u) = \ork(X)+1 = \ork(X \cup v)$, and hence $\cl(X \cup u) = \cl(X \cup v)$. Thus, $\cl$ satisfies the Steinitz-Mac Lane exchange axiom.

We now prove that the third property implies the second. Suppose all closed sets are spans, then we shall prove that all closed sets of outer rank $k$ are flats, by induction on $0 \le k \le r$. This is clear for $k=0$, hence suppose it holds for up to $k-1$. Consider a minimal closed set $c$ of outer rank $k$, i.e. $\ork(c) = k$ and $\ork(c') = k-1$ for any closed set $c' \subset c$. By hypothesis, we have $c = \spancl(Y)$ for some $Y \subseteq c$; we now prove that $c = \cl(Y)$. Suppose that $\cl(Y) \subset c$, then $\cl(Y) = c'$, a closet set of outer rank at most $k-1$. Then $c'$ is a flat, i.e. $c' = \spancl(c') = \spancl(\cl(Y)) = \spancl(Y) = c$, contradiction. Thus, $c = \spancl(c)$ and $c$ is a flat.
\end{proof}

There are solvable closure operators which are not matroids, e.g. the undirected graph $\bar{C}_4$ displayed in Figure \ref{fig:C4}. It is solvable because it has rank $2$ and contains $K_2 \cup K_2$. More explicitly, the following is a solution for it over any alphabet:
\begin{align*}
	P_i(y) &:= \{x \in A^2: x_i = y\}, \quad i \in \{1,2\}\\
         f_1 = f_3 &= \{P_1(y) : y \in A\}\\
         f_2 = f_4 &= \{P_2(y) : y \in A\}\\
\end{align*}
 In that case, note that the outer rank is submodular, and hence $\spancl_{\bar{C}_4} = U_{2,4}$ is a matroid; however, $\cl_{\bar{C}_4}$ is not a matroid itself.

We would like to explain the significance of flats in matroids for random network coding. A model for noncoherent random network coding based on matroids is proposed in \cite{GG11}, which generalises routing (a special case for the uniform matroid), linear network coding (the projective geometry) and affine network coding (the affine geometry). In order to combine the messages they receive, the intermediate nodes select a random element from the closure of the received messages. The model is based on matroids because all closed sets are flats, hence a new message is either in the closure of all the previously received messages (and is not informative), or it increases the outer rank (and is fully informative).

\begin{figure}
\begin{center}
\begin{tikzpicture}
    \tikzstyle{every node}=[draw,shape=circle];

	\node (1) at (0,2) {1};
	\node (2) at (2,2) {2};
	\node (3) at (2,0) {3};
	\node (4) at (0,0) {4};

    \draw[latex-latex] (1) -- (2);
    \draw[latex-latex] (2) -- (3);
    \draw[latex-latex] (3) -- (4);
    \draw[latex-latex] (4) -- (1);
\end{tikzpicture}
\end{center}
\caption{The graph $\bar{C}_4$ whose closure operator is solvable but not a matroid.} \label{fig:C4}
\end{figure}
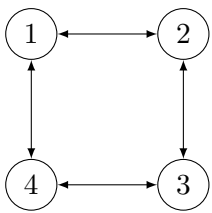

\subsection{Upper and lower ranks} \label{sec:lr_ur}

We are now interested in lower bounds on the entropy of coding functions. Since any closure operator has a trivial coding function with entropy zero (where the universal partition is placed on every vertex), the entropy of any coding function cannot be bounded below. Therefore, most of our bounds will apply to solutions only.

\begin{definition} \label{def:lr_ur}
The {\em lower rank} and {\em upper rank} of $X$ are respectively defined as
\begin{align*}
	\lrk(X) &:= \min\{|Y| : \cl(Y \cup (V \backslash X)) = V\},\\
	\urk(X) &:= r - \lrk(V \backslash X).
\end{align*}
\end{definition}

A few elementary properties of the lower and upper ranks are listed below. Again, if $\cl_1 \le \cl_2$, then $\lrk_1(X) \ge \lrk_2(X)$ and $\urk_1(X) \ge \urk_2(X)$ for all $X \subseteq V$.

\begin{lemma} \label{lem:lr_ur}
The following hold:
\begin{enumerate}
	\item $\lrk(V) = \urk(V) = r$ and $\lrk(\emptyset) = \urk(\emptyset) = 0$.
	
	\item For any $X \subseteq V$, $\lrk(X) = 0$ if and only if $\cl(V \backslash X) = V$. Hence $\urk(X) = r$ if and only if $\cl(X) = V$.

	\item \label{it:urk_alternate} For any $X \subseteq V$,
	\begin{align*}
		\urk(X) &= r - \min\{\ork(Y): \cl(X \cup Y) = V\}\\ &= r - \min\{\ork(F): F \mbox{ flat and } \cl(X \cup F) = V\}.
	\end{align*}

	\item \label{it:urk(X)<urk(Z)} If $X \subseteq Z$, then $\urk(X) \le \urk(Z)$ and $\lrk(X) \le \lrk(Z)$.

	\item \label{it:urk_cl} $\urk(X) = \urk(\cl(X))$.
	
	\item \label{it:lrk<urk} $\lrk(X) \le \urk(X) \le \ork(X)$.
\end{enumerate}
\end{lemma}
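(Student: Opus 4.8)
The plan is to first record the unwound form of the upper rank: substituting $V \setminus X$ for $X$ in the definition of $\lrk$ gives $\urk(X) = r - \min\{|Y| : \cl(X \cup Y) = V\}$, which is the form I would use throughout. Properties~1 and~2 then fall out by direct substitution. Indeed $\lrk(V)$ minimizes $|Y|$ subject to $\cl(Y) = V$, which is exactly $r$, while $\lrk(\emptyset)$ is witnessed by $Y = \emptyset$ since $\cl(V) = V$; the two $\urk$ identities are immediate from $\urk(X) = r - \lrk(V \setminus X)$. For Property~2, $\lrk(X) = 0$ holds exactly when $Y = \emptyset$ is admissible, i.e. $\cl(V \setminus X) = V$, and the $\urk$ statement is the complementary rephrasing.

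For Property~3, I would show $\min\{|Y| : \cl(X \cup Y) = V\} = \min\{\ork(Y) : \cl(X \cup Y) = V\}$. The inequality ``$\le$'' on the $\ork$-minimum is clear since $\ork(Y) \le |Y|$ by Proposition~\ref{prop:irk_ork}. For the reverse, take $Y$ achieving the $\ork$-minimum and replace it by an outer basis $o$ of $Y$: from $Y \subseteq \cl(o)$ one checks $\cl(X \cup o) = V$, and $|o| = \ork(Y)$, so an admissible $Y$ of size $\ork(Y)$ exists. The flat version then uses Proposition~\ref{prop:flat}: enlarge the optimal $Y$ to a flat $F \supseteq Y$ of equal outer rank, which still satisfies $\cl(X \cup F) = V$ by isotonicity, so the minimum over flats is attained. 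Properties~4 and~5 are handled by inclusions. For Property~4 I would first prove $\lrk$ is monotone: if $X \subseteq Z$ then $V \setminus Z \subseteq V \setminus X$, so any $Y$ admissible for $Z$ is admissible for $X$, giving $\lrk(X) \le \lrk(Z)$; applying this to complements yields $\urk(X) \le \urk(Z)$. For Property~5 I would verify $\cl(X \cup Y) = \cl(\cl(X) \cup Y)$ directly from the three closure axioms, so the two minimization problems defining $\urk(X)$ and $\urk(\cl(X))$ coincide.

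The main obstacle is Property~6, and in particular the bound $\lrk(X) \le \urk(X)$. The inequality $\urk(X) \le \ork(X)$ is the easier half: if $\cl(X \cup Y) = V$ then, using Proposition~\ref{prop:irk_ork}, $r = \ork(V) = \ork(X \cup Y) \le \ork(X) + \ork(Y) \le \ork(X) + |Y|$, so every admissible $Y$ has $|Y| \ge r - \ork(X)$ and hence $\urk(X) \le \ork(X)$. For $\lrk(X) \le \urk(X)$, equivalently $\lrk(X) + \lrk(V \setminus X) \le r$, the key idea is to split a basis: fix $b$ with $|b| = r$ and $\cl(b) = V$, and observe that $(b \cap X) \cup (V \setminus X) \supseteq b$ and $(b \setminus X) \cup X \supseteq b$. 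The first containment shows $b \cap X$ is admissible for $\lrk(X)$ and the second shows $b \setminus X$ is admissible for $\lrk(V \setminus X)$, whence $\lrk(X) + \lrk(V \setminus X) \le |b \cap X| + |b \setminus X| = r$. I expect this basis-splitting step to be the crux, since it is the only place where a global basis of $\cl$, rather than a local rank computation, is exploited.
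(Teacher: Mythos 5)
Your proposal is correct and takes essentially the same route as the paper: the first five properties are routine consequences of the definitions (the paper dismisses them as easily proved, with Properties 4 and 5 reduced to the alternate form and the identity $\cl(X \cup Y) = \cl(\cl(X) \cup Y)$, just as you do), and your treatment of Property 6 matches the paper exactly — $\urk(X) \le \ork(X)$ via subadditivity of the outer rank, and $\lrk(X) \le \urk(X)$ via the basis-splitting argument showing $b \cap X$ and $b \setminus X$ are admissible for $\lrk(X)$ and $\lrk(V \backslash X)$ respectively, whence $\lrk(X) + \lrk(V \backslash X) \le |b| = r$. You correctly identified the basis split as the crux; the rest is definition-chasing in both versions.
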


\begin{proof}
The first three properties are easily proved. Property \ref{it:urk(X)<urk(Z)} for the upper rank follows from Property \ref{it:urk_alternate}; the result for the lower rank follows from $\lrk(X) = r- \urk(V \backslash X)$. For Property \ref{it:urk_cl}, Property \ref{it:urk(X)<urk(Z)} yields $\urk(X) \le \urk(\cl(X))$ while $\cl(X \cup Y) = \cl(\cl(X) \cup Y)$ yields the reverse inequality. We now prove Property \ref{it:lrk<urk}. The inequality $\urk(X) \le \ork(X)$ follows from the subadditivity of the outer rank. To prove that $\lrk(X) \le \urk(X)$, let $b$ be a basis for $\cl$. Then
$$
	V = \cl(b) = \cl\left\{(b \cap X) \cup (b \cap (V \backslash X))\right\} \subseteq \cl\left\{(b \cap X) \cup (V \backslash X)\right\},
$$
and hence $\cl\left\{(b \cap X) \cup (V \backslash X)\right\} = V$, thus $|b \cap X| \ge \lrk(X)$. Similarly, $|b \cap (V \backslash X)| \ge \lrk(V \backslash X)$, and hence
$r = |b| \ge \lrk(X) + \lrk(V \backslash X).$
\end{proof}

We remark that for any solution $f$, we have $r = H_f(V) \le \ork(X) + \ork(Y)$ for any $X,Y$ such that $\cl(X \cup Y) = V$. Therefore, we obtain
$$
	H_f(X) \ge \urk(X)
$$
for all $X \subseteq V$.

\begin{corollary} \label{cor:bounds_H}
For any solution $f$ of $\cl$ and any $X \subseteq V$,
$$
	r - H_f(V \backslash \cl(X)) \le \lrk(\cl(X)) \le \urk(X) \le H_f(X) \le \ork(X).
$$
\end{corollary}

Note that a trivial lower bound on $H_f(X)$ (where $f$ is a solution) is given by $r - H_f(V \backslash X)$. Therefore, the intermediate bounds on $H_f(X)$ in Corollary \ref{cor:bounds_H} refine this trivial bound.

Some of the results above can be generalised for any coding function $f$: denoting
\begin{align*}
	\lrk_f(X) &= \min\{H_f(Y) : \cl(Y \cup (V \backslash X)) = V\},\\
	\urk_f(X) &= H_f(V) -\lrk_f(V \backslash X),
\end{align*}
we obtain
$$
	H_f(V) - H_f(V \backslash \cl(X)) \le \lrk_f(\cl(X)) \le \urk_f(X) \le H_f(X) \le \ork(X).
$$

We finish this subsection by remarking that Theorem \ref{th:matroid} has an analogue for the upper rank. Namely, define an upper flat as a set $F$ such that $F \subset X$ implies $\urk(X) > \urk(F)$; define also the upper span of $X$ as
$$
	\uspancl(X) := \bigcup \{F: F \,\mbox{upper flat}, X \subseteq F, \urk(F) = \urk(X)\} = \{v \in V: \urk(X \cup v) = \urk(X)\}.
$$

\begin{theorem}
The following are equivalent:
\begin{enumerate}
	\item $\cl$ is a matroid;

	\item all closed sets are upper flats, i.e. $\cl(X) = \uspancl(X)$ for all $X \subseteq V$;

	\item all closed sets are upper spans, i.e. for all $X \subseteq V$, there exists $Y \subseteq V$ such that $\cl(X) = \uspancl(Y)$.
\end{enumerate}
\end{theorem}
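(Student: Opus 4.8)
The plan is to mirror the proof of Theorem~\ref{th:matroid} as closely as possible, exploiting the structural parallel between the pairs $(\ork,\spancl)$ and $(\urk,\uspancl)$. The crucial fact that makes such a transfer viable is Lemma~\ref{lem:lr_ur}: the upper rank is isotone (Property~\ref{it:urk(X)<urk(Z)}), it is invariant under closure (Property~\ref{it:urk_cl}), and it agrees with the outer rank on the extreme values $\urk(\emptyset)=0$, $\urk(V)=r$. Moreover, by Corollary~\ref{cor:bounds_H} we always have $\urk(X)\le\ork(X)$, with equality forced at many sets. I would first establish the analogue of Proposition~\ref{prop:span} for upper flats and upper spans, namely that $\cl(X)\subseteq\uspancl(X)$ always holds, with equality precisely when $\cl(X)$ is an upper flat; this is immediate from the isotonicity of $\urk$ and from the two equivalent descriptions of $\uspancl$ given before the statement.

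The implication (1)$\Rightarrow$(3) should be the easiest: if $\cl$ is a matroid then, by Theorem~\ref{th:matroid}, every closed set is a flat and equals its span. I would argue that for a matroid the outer rank is submodular (indeed a matroid rank function), so $\urk=\ork$ on every set, whence upper flats coincide with flats and upper spans coincide with spans; taking $Y=X$ then gives $\cl(X)=\spancl(X)=\uspancl(X)$, establishing (2) directly and (3) a fortiori. For the implication (2)$\Rightarrow$(1), I would imitate the corresponding step in Theorem~\ref{th:matroid}: given $X$, $v$, and $u\in\cl(X\cup v)\setminus\cl(X)$, I want to deduce $\urk(X\cup u)=\urk(X)+1=\urk(X\cup v)$, forcing $u$ and $v$ to behave symmetrically and yielding $\cl(X\cup u)=\cl(X\cup v)$, i.e.\ the Steinitz--Mac~Lane exchange axiom. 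The delicate point here is that the increment argument used for $\ork$ relied on the crude subadditivity $\ork(X\cup u)\le\ork(X)+1$, and I must check that the upper rank increments by exactly one under adding a single element outside the upper span; this follows from Property~\ref{it:urk(X)<urk(Z)} together with the fact that $u\notin\cl(X)$ means $u\notin\uspancl(X)$ under hypothesis~(2), since (2) asserts $\uspancl(X)=\cl(X)$.

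The implication (3)$\Rightarrow$(2) is the one I expect to carry the real weight, and I would reproduce the induction on the upper rank $k$ used in Theorem~\ref{th:matroid}. The base case $k=0$ is Property~\ref{it:urk_cl} combined with Lemma~\ref{lem:lr_ur}(1). Assuming all closed sets of upper rank at most $k-1$ are upper flats, I would take a closed set $c$ of upper rank $k$ that is minimal among closed sets of that rank, write $c=\uspancl(Y)$ by hypothesis with $Y\subseteq c$, and argue that $\cl(Y)=c$: if instead $\cl(Y)\subset c$ were a closed set $c'$ of strictly smaller upper rank, the induction hypothesis would make $c'$ an upper flat, giving $c'=\uspancl(c')=\uspancl(\cl(Y))=\uspancl(Y)=c$, a contradiction. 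Hence $c=\cl(Y)=\uspancl(Y)=\uspancl(c)$, so $c$ is an upper flat, completing the induction. The main obstacle I anticipate is verifying that the upper rank genuinely supports the same minimality-and-induction bookkeeping as the outer rank: specifically, that $\urk$ strictly decreases when passing from a minimal closed set of rank $k$ to a properly contained closed set, and that $\uspancl(\cl(Y))=\uspancl(Y)$. The first should follow from the analogue of Proposition~\ref{prop:span}(2) for upper spans, and the second from Property~\ref{it:urk_cl}; if either fails to hold in the same clean form as for $\ork$, the induction would need an extra lemma isolating the submodular-like behaviour of $\urk$ on the lattice of closed sets.
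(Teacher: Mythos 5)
The paper never actually proves this theorem: it is stated as a remark, ``an analogue'' of Theorem~\ref{th:matroid} for the upper rank, with the proof left to the reader. So your strategy of transplanting the proof of Theorem~\ref{th:matroid} with $(\urk,\uspancl)$ in place of $(\ork,\spancl)$ is precisely the intended argument, and its skeleton is sound: the facts you need about $\urk$ --- isotonicity (Property~\ref{it:urk(X)<urk(Z)}), invariance under closure (Property~\ref{it:urk_cl}), hence $\uspancl(\cl(Y))=\uspancl(Y)$, and the analogue of Proposition~\ref{prop:span} ($\cl(X)\subseteq\uspancl(X)$ with equality iff $\cl(X)$ is an upper flat) --- are all true and provable along the lines you sketch, as is the reduction of (1) to Theorem~\ref{th:matroid} via $\urk=\ork$ on matroids. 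However, at two points the justification you cite does not prove the fact you need; both are repairable, but as written they are gaps.

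First, in (2)$\Rightarrow$(1) you need that adding one element raises the upper rank by \emph{at most} one, i.e.\ $\urk(X\cup u)\le\urk(X)+1$; this is the substitute for the subadditivity bound $\ork(X\cup u)\le\ork(X)+1$ used in Theorem~\ref{th:matroid}. You claim the ``increments by exactly one'' statement follows from Property~\ref{it:urk(X)<urk(Z)} together with $u\notin\uspancl(X)$. Those two facts give only the lower bound $\urk(X\cup u)\ge\urk(X)+1$; they say nothing about the upper bound, which is the half that does the actual work. The upper bound is true but needs its own one-line argument from Definition~\ref{def:lr_ur}: if $\cl((X\cup u)\cup Y)=V$ then $\cl(X\cup(Y\cup u))=V$, so $\lrk(V\backslash X)\le\lrk(V\backslash(X\cup u))+1$, i.e.\ $\urk(X\cup u)\le\urk(X)+1$.

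Second, the base case of your induction in (3)$\Rightarrow$(2) is not ``Property~\ref{it:urk_cl} combined with Lemma~\ref{lem:lr_ur}(1)''. For the outer rank the base case is free because $\cl(\emptyset)$ is the \emph{unique} closed set of outer rank $0$ and is automatically a flat; for the upper rank both statements fail in general. In the paper's own example $\cl(X)=\{1,\ldots,\max(X)\}$, every closed set except $V$ has upper rank $0$, yet only $\{1,\ldots,n-1\}$ is an upper flat, so ``closed of upper rank $0$'' does not imply ``upper flat'' without hypothesis (3). The fix is that your minimality argument already covers $k=0$: if $c=\uspancl(Y)$ is a minimal closed set of upper rank $0$ with $Y\subseteq c$, then $\cl(Y)\subset c$ is impossible, since $\cl(Y)$ would be a closed set of upper rank $0$ properly inside $c$; hence $c=\cl(Y)$ and $c=\uspancl(\cl(Y))=\uspancl(c)$ is an upper flat. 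Finally --- and this defect is inherited from the paper's own proof of Theorem~\ref{th:matroid} --- the induction only treats \emph{minimal} closed sets of upper rank $k$; you should add the observation that once these are upper flats, no non-minimal closed set of upper rank $k$ can exist, since it would properly contain an upper flat of the same upper rank.
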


\subsection{Inner and outer complemented sets} \label{sec:complemented}

We are now interested in a case where the bounds on the entropy are tight.

\begin{definition} \label{def:complemented}
We say a set $X$ is {\em outer complemented} if $\ork(X) = \urk(X)$. Moreover, we say it is {\em inner complemented} if $\irk(X) = \urk(X)$.
\end{definition}

Therefore, if $X$ is outer complemented, then $H_f(X) = \ork(X) = \urk(X)$ for any solution $f$.

Remark that $X$ is outer (inner) complemented if and only if $\cl(X)$ is outer (inner) complemented.

\begin{proposition} \label{prop:complemented}
The following are equivalent:
\begin{enumerate}
    \item $X$ is outer complemented;

    \item \label{it:oc_Y} there exists $Z$ such that $\ork(X) + \ork(Z) = r$, $\cl(X \cup Z) = V$ and $X \cap Z = \emptyset$;


    \item any outer basis of $X$ is contained in a basis of $V$.
\end{enumerate}

Similar results hold for inner complemented sets. The following are equivalent:
\begin{enumerate}
    \item $X$ is inner complemented;

    \item $X$ is outer complemented and $\irk(X) = \ork(X)$;

    \item any inner basis of $X$ is contained in a basis of $V$.
\end{enumerate}
\end{proposition}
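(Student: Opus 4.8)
The plan is to prove both lists by a cyclic implication scheme, leaning throughout on the chain $\urk(X) \le \ork(X) \le \irk(X)$ (Lemma~\ref{lem:lr_ur} and Proposition~\ref{prop:irk_ork}) together with the alternate formula $\urk(X) = r - \min\{\ork(Y): \cl(X \cup Y) = V\}$ from Lemma~\ref{lem:lr_ur}. The first two items of the outer list are the easiest, so I would clear them first and reuse them when attacking the third.

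For (1)$\Rightarrow$(2): assuming $\ork(X) = \urk(X)$, the alternate formula produces a set $Y$ with $\cl(X \cup Y) = V$ and $\ork(Y) = r - \ork(X)$. Setting $Z := Y \setminus X$ preserves $\cl(X \cup Z) = \cl(X \cup Y) = V$ and gives $X \cap Z = \emptyset$, while $\ork(Z) \le \ork(Y) = r - \ork(X)$ by monotonicity. Since $\cl(X \cup Z) = V$ forces $\ork(X \cup Z) = r$, subadditivity (Proposition~\ref{prop:irk_ork}) yields $r \le \ork(X) + \ork(Z)$, so equality holds and (2) follows. Conversely, any $Z$ as in (2) is an admissible argument in the alternate formula, whence $\urk(X) \ge r - \ork(Z) = \ork(X)$; with $\urk(X) \le \ork(X)$ this gives (1). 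The disjointness clause $X \cap Z = \emptyset$ is cosmetic: it is never used for (2)$\Rightarrow$(1) and is always arrangeable in (1)$\Rightarrow$(2).

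For (1)$\Rightarrow$(3) I would first note that every outer basis $o$ of $X$ is itself outer complemented: from $\cl(X) \subseteq \cl(o)$ we get $\urk(X) \le \urk(o) \le \ork(o) = |o| = \ork(X)$, and (1) collapses this to $\urk(o) = \ork(o)$. Applying the already-established (1)$\Rightarrow$(2) to $o$ produces $Z$ with $\ork(o)+\ork(Z)=r$ and $\cl(o \cup Z)=V$; replacing $Z$ by an outer basis $z$ of size $r - \ork(X)$ preserves $\cl(o \cup z)=V$ since $Z \subseteq \cl(z)$. Then $o \cup z$ has closure $V$ and cardinality at most $r$, hence is exactly a basis containing $o$, giving (3). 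The same template handles (2)$\Rightarrow$(3).

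The step I expect to be the main obstacle is the converse (3)$\Rightarrow$(1). The natural attempt extends an outer basis $o$ to a basis $b = o \cup Z$ with $|Z| = r - \ork(X)$ and tries to use $Z$ as the complement in (2); this demands upgrading $\cl(o \cup Z) = V$ to $\cl(X \cup Z) = V$. Here the delicate point is that an outer basis satisfies only $\cl(X) \subseteq \cl(o)$, not $\cl(X) = \cl(o)$, so the surplus generating power of $o$ over $X$ need not be recoverable from $X \cup Z$ alone; securing this transfer is where the real work sits, and it is clean only when $\ork(X) = \irk(X)$. This is exactly why the inner list is better behaved, and I would prove it by the same scheme: items (1) and (2) of the inner list are equivalent immediately, since $\irk(X)=\urk(X)$ squeezes $\ork(X)$ to equal both via the chain, and conversely. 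For the basis-extension item, an \emph{inner} basis $i$ satisfies $\cl(i) = \cl(X)$ exactly, so from a basis $b = i \cup Z$ one gets $\cl(X \cup Z) = \cl(\cl(i) \cup Z) = \cl(i \cup Z) = V$ with no loss; this yields $\urk(X) \ge r - |Z| = \irk(X)$, and the chain then forces $\urk(X) = \ork(X) = \irk(X)$, establishing both inner conditions at once.
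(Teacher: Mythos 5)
Your handling of the outer items (1) and (2), of the implication (1)$\Rightarrow$(3), and of the inner list is correct, and it is considerably more explicit than the paper's own proof, which dismisses (1)$\Leftrightarrow$(2) as ``easily shown'' and the inner list as ``easy to prove''. The only loose end on the inner side is the direction (1)$\Rightarrow$(3), which you leave to ``the same scheme''; it does follow in one line from what you prove: inner complementedness forces $\irk(X)=\ork(X)$, so every inner basis is an outer basis, and your outer (1)$\Rightarrow$(3) applies to it.

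The genuine gap is the one you flagged yourself: you never prove outer (3)$\Rightarrow$(1), so the outer equivalence is not established. But your diagnosis of the obstruction is exactly right, and in fact that implication is \emph{false}, so the gap cannot be filled by any argument. Take $V=\{1,2,3,4\}$ and let the closed sets be $\emptyset$, $\{1\}$, $\{2\}$, $\{4\}$, $\{1,2\}$, $\{1,2,3\}$, $\{1,2,4\}$, $V$; this family is intersection-closed and contains $V$, so it defines a closure operator, with $\cl(3)=\{1,2,3\}$, unique basis $\{3,4\}$, and rank $r=2$. For $X=\{1,2\}$ we have $\ork(X)=1$, and the unique outer basis $\{3\}$ is contained in the basis $\{3,4\}$, so property (3) holds. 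Yet any $Y$ with $\cl(X\cup Y)=V$ must contain both $3$ and $4$, because $\{1,2,3\}$ and $\{1,2,4\}$ are closed; hence $\urk(X)=r-\lrk(V\setminus X)=2-2=0<1=\ork(X)$, so properties (1) and (2) fail. This is precisely the failure mode you predicted: $\cl(o\cup Z)=V$ cannot be upgraded to $\cl(X\cup Z)=V$ when $\cl(o)\supsetneq\cl(X)$. The paper's own proof commits exactly this unjustified upgrade (it takes the extension $Z=b\setminus o$ of an outer basis $o$ and declares it ``a valid $Z$ for Property (2)''), so the proposition as printed is incorrect. What survives is (1)$\Leftrightarrow$(2)$\Rightarrow$(3) for the outer list, the implication (3)$\Rightarrow$(1) under the extra hypothesis $\irk(X)=\ork(X)$, and the full equivalence for the inner list --- which is exactly the picture your proposal paints.
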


\begin{proof}
The equivalence of the first two properties is easily shown. If $X$ is outer complemented, let $o$ be an outer basis of $X$ and let $Z$ satisfy $\cl(X \cup Z) = V$ and $|Z| = r - \ork(X)$. Then $o \cup Z$ is a basis of $V$. Conversely, if any outer basis can be extended to a basis, then any such extension is a valid $Z$ for Property \ref{it:oc_Y}. The properties for an inner complemented set are easy to prove.
\end{proof}

We saw earlier that $\cl(X) \subseteq \cl_f(X)$ for any coding function $f$ and any $X$. This can be refined when $f$ is a solution and $X$ is outer complemented.

\begin{lemma} \label{prop:complemented_span}
If $f$ is a solution of $\cl$ then $\cl(\spancl(X)) \subseteq \cl_f(X)$ for any outer complemented $X$.
\end{lemma}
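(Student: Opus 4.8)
The plan is to reduce everything to one vertex at a time and to avoid reasoning about $\ork(\spancl(X))$ as a whole. Recall that by definition $\cl_f(X) = \{v \in V : f_{X \cup v} = f_X\}$, and since $f_{X\cup v} = f_X \vee f_v$ always refines $f_X$, we have $f_{X \cup v} = f_X$ if and only if $H_f(X \cup v) = H_f(X)$ (the entropy of a proper refinement strictly increases). Thus $v \in \cl_f(X)$ precisely when $H_f(X \cup v) = H_f(X)$. First I would observe that it suffices to prove the apparently weaker inclusion $\spancl(X) \subseteq \cl_f(X)$: applying the isotone, idempotent operator $\cl_f$ and using $\cl \le \cl_f$ then gives $\cl(\spancl(X)) \subseteq \cl_f(\spancl(X)) \subseteq \cl_f(\cl_f(X)) = \cl_f(X)$.

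The heart of the argument is to show that $X \cup v$ is \emph{itself} outer complemented for every $v \in \spancl(X)$. By Proposition \ref{prop:span}, $v \in \spancl(X)$ means $\ork(X \cup v) = \ork(X)$. Combining the monotonicity of the upper rank with the bound $\urk \le \ork$ from Lemma \ref{lem:lr_ur}, together with the hypothesis $\ork(X) = \urk(X)$, I would sandwich
$$
\ork(X) = \urk(X) \le \urk(X \cup v) \le \ork(X \cup v) = \ork(X),
$$
which forces $\urk(X \cup v) = \ork(X \cup v)$, that is, $X \cup v$ is outer complemented.

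Finally, since $f$ is a solution, outer complemented sets attain their outer rank, so $H_f(X \cup v) = \ork(X \cup v)$ and $H_f(X) = \ork(X)$ (the tight case of Corollary \ref{cor:bounds_H}). As $\ork(X \cup v) = \ork(X)$, we obtain $H_f(X \cup v) = H_f(X)$, hence $v \in \cl_f(X)$, completing the proof. The main obstacle is resisting the natural but flawed route of bounding $H_f(\spancl(X)) \le \ork(\spancl(X))$ and asserting $\ork(\spancl(X)) = \ork(X)$: this equality can fail in general, since a point may lie on several flats of its own rank whose union has strictly larger outer rank (one checks this already on a small non-matroidal closure operator where the span of a single vertex fills out $V$). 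It is exactly the outer complemented hypothesis, applied vertex by vertex to each $X \cup v$ rather than to $\spancl(X)$ collectively, that rescues the statement.
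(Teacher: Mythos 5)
Your proof is correct and is essentially the paper's argument: both work vertex by vertex, using $\ork(X \cup v) = \ork(X)$ for $v \in \spancl(X)$ together with the tightness $H_f(X) = \ork(X)$ forced by the solution hypothesis to conclude $H_f(X \cup v) = H_f(X)$, hence $\spancl(X) \subseteq \cl_f(X)$, and then pass to $\cl(\spancl(X)) \subseteq \cl_f(X)$ because $\cl_f(X)$ is $\cl$-closed. The only divergence is minor: the paper squeezes the entropy directly via its monotonicity ($\ork(X) = \ork(X \cup v) \ge H_f(X \cup v) \ge H_f(X) = \ork(X)$), whereas you squeeze the upper rank first --- yielding the pleasant extra observation that each $X \cup v$ is itself outer complemented --- and then invoke the tight case of Corollary \ref{cor:bounds_H} at both $X$ and $X \cup v$.
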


\begin{proof}
For any outer complemented $X$, we have $H_f(X) = \ork(X)$. Suppose $v \in \spancl(X)$, then $\ork(X) = \ork(X \cup v) \ge H_f(X \cup v) \ge H_f(X) = \ork(X)$ and hence $v \in \cl_f(X)$. Since $\cl_f(X)$ is a closed set of $\cl$, we easily obtain that $\cl(\spancl(X)) \subseteq \cl_f(X)$.
\end{proof}

\begin{corollary} \label{cor:complemented_span}
If there exists an outer complemented set $X$ such that its span has higher outer rank and is also outer complemented, then $\cl$ is not solvable over any alphabet.
\end{corollary}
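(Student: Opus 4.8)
The plan is to argue by contradiction, assuming a solution $f$ of $\cl$ exists and deriving a violation of the outer-rank hypothesis. The engine of the proof is Lemma \ref{prop:complemented_span}, which applies precisely because $X$ is outer complemented: it gives $\cl(\spancl(X)) \subseteq \cl_f(X)$. The whole task then reduces to reading off an entropy inequality from this set containment and converting it back into a statement about outer ranks.

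Concretely, I would first recall two facts about the entropy of a solution $f$. Since $f$ is a coding function, $H_f(\cl(W)) = H_f(W)$ for every $W \subseteq V$; and by the definition of the induced closure $\cl_f$, we have $f_{\cl_f(X)} = f_X$, hence $H_f(\cl_f(X)) = H_f(X)$. Using the monotonicity of $H_f$ together with the containment from Lemma \ref{prop:complemented_span}, I would chain these into
$$
    H_f(\spancl(X)) = H_f(\cl(\spancl(X))) \le H_f(\cl_f(X)) = H_f(X).
$$
The left equality uses that $f$ is a coding function, the middle inequality uses isotonicity of $H_f$ applied to $\cl(\spancl(X)) \subseteq \cl_f(X)$, and the right equality uses the definition of $\cl_f$.

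Next I would invoke the fact, recorded in the remark following Definition \ref{def:complemented} (itself a consequence of Corollary \ref{cor:bounds_H}), that an outer complemented set $W$ satisfies $H_f(W) = \ork(W)$ for every solution $f$. Applying this to both $X$ and its span $\spancl(X)$ — both assumed outer complemented — the displayed inequality becomes $\ork(\spancl(X)) \le \ork(X)$. This directly contradicts the hypothesis that the span has strictly higher outer rank than $X$. Hence no solution $f$ can exist, and $\cl$ is unsolvable over every alphabet $A$.

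I do not expect a genuinely hard step here, since the heavy lifting is already done by Lemma \ref{prop:complemented_span}. The only point demanding care is the entropy chain: one must keep straight which equalities come from $f$ being a coding function (invariance of $H_f$ under $\cl$) versus from the definition of $\cl_f$ (invariance under $\cl_f$), and ensure the single inequality used is just monotonicity of $H_f$ under set inclusion. Once that bookkeeping is clean, the outer-complemented hypotheses on $X$ and $\spancl(X)$ translate the entropy bound into the contradictory rank inequality immediately.
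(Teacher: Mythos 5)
Your proof is correct and follows essentially the same route the paper intends: the paper states the corollary as an immediate consequence of Lemma \ref{prop:complemented_span}, and your argument is exactly that implicit proof, combining the containment $\cl(\spancl(X)) \subseteq \cl_f(X)$ with monotonicity of $H_f$ and the fact (noted after Definition \ref{def:complemented}) that $H_f(W) = \ork(W)$ for any outer complemented $W$ and any solution $f$, to obtain the contradiction $\ork(\spancl(X)) \le \ork(X)$. Your bookkeeping of which equalities come from the coding-function property versus the definition of $\cl_f$ is accurate, so there is nothing to fix.
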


By extension, we say that $\cl$ is outer complemented if all sets are outer complemented. We can characterise the solvable outer complemented closure operators.

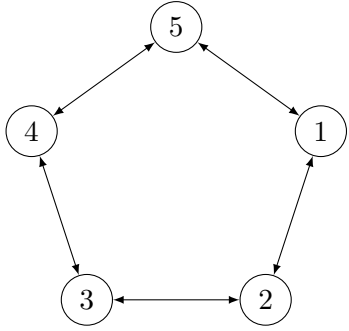
\begin{figure}
\begin{center}
\begin{tikzpicture}
    \tikzstyle{every node}=[draw,shape=circle];

\foreach \x in {1,...,5}{
	\node (\x) at (90-72*\x:2) {\x};
}

    \draw[latex-latex] (1) -- (2);
    \draw[latex-latex] (2) -- (3);
    \draw[latex-latex] (3) -- (4);
    \draw[latex-latex] (4) -- (5);
    \draw[latex-latex] (5) -- (1);
\end{tikzpicture}
\end{center}
\caption{The graph $\bar{C}_5$ whose closure operator is outer complemented and not solvable.} \label{fig:C5}
\end{figure}

\begin{theorem} \label{th:complemented_span_matroid}
Suppose that $\cl$ has rank $r$ and is outer complemented. Then $\cl$ is solvable if and only if $\spancl$ is a solvable matroid with rank $r$.
\end{theorem}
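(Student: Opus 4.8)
The plan is to prove the two directions separately; the reverse one is short. Suppose $\spancl$ is a solvable matroid of rank $r$. By Proposition~\ref{prop:span} we have $\cl(X)\subseteq\spancl(X)$ for every $X$, so $\cl\le\spancl$, and both operators have rank $r$ (for $\cl$ this is the standing hypothesis, for $\spancl$ the assumption). The fourth of the remarks following the solvability problem---that a solvable operator remains solvable under any smaller operator of the same rank---then yields at once that $\cl$ is solvable, with no use of outer complementedness.

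For the forward direction, assume $\cl$ is solvable and fix a solution $f$ over an alphabet $A$. Since $\cl$ is outer complemented, \emph{every} subset is outer complemented, and hence so is every span $\spancl(X)$. The first key step is the rank identity $\ork(\spancl(X))=\ork(X)$ for all $X$: monotonicity gives $\ork(\spancl(X))\ge\ork(X)$, while if the inequality were strict then $X$ would be an outer complemented set whose span is outer complemented and of strictly larger outer rank, contradicting Corollary~\ref{cor:complemented_span} (this identity also follows from the solution $f$ built below). Using the characterization $\spancl(X)=\{v:\ork(X\cup v)=\ork(X)\}$ of Proposition~\ref{prop:span}, I would then verify that $\spancl$ is a matroid. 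Extensivity is item~\ref{it:flat_rank} of Proposition~\ref{prop:flat}; idempotence and isotonicity reduce to the rank identity together with the monotonicity and subadditivity of $\ork$ (if $y,v\in\spancl(X)$ then $X\cup y\cup v\subseteq\spancl(X)$ forces $\ork(X\cup y\cup v)=\ork(X)$, and if instead $\ork(X\cup y)=\ork(X)+1$ then $\ork(X\cup y\cup v)\le\ork(X\cup v)+1=\ork(X)+1=\ork(X\cup y)$); and the Steinitz--Mac Lane exchange property comes for free, since $u\in\spancl(X\cup v)\setminus\spancl(X)$ gives $\ork(X\cup u)=\ork(X)+1$ and $\ork(X\cup u\cup v)=\ork(X\cup v)\le\ork(X)+1$, whence $v\in\spancl(X\cup u)$.

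It remains to check that this matroid has rank $r$ and is solvable. Applying the rank identity to any spanning set $b$ of $\spancl$ gives $\ork(b)=\ork(V)=r$, so $|b|\ge r$, while any basis of $\cl$ spans $\spancl$; hence $r(\spancl)=r$. For solvability I would show that the same $f$ is a solution of $\spancl$: by Lemma~\ref{prop:complemented_span} we have $\spancl(X)\subseteq\cl(\spancl(X))\subseteq\cl_f(X)$, and since $f_{\cl_f(X)}=f_X$ (from the definition of $\cl_f$) while $X\subseteq\spancl(X)\subseteq\cl_f(X)$, the monotonicity of refinement squeezes $f_{\spancl(X)}=f_X$. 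Thus $f$ is a coding function for $\spancl$, and as $f_V=E_{A^r}$ with $H_f(V)=r=r(\spancl)$ it is a solution, so $\spancl$ is a solvable matroid of rank $r$.

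I expect the main obstacle to be the passage, in the forward direction, from the rank identity to the assertion that $\spancl$ is a matroid. The exchange step itself is automatic from the subadditivity and unit increase of the outer rank, so the real content is that solvability, through Corollary~\ref{cor:complemented_span}, forces $\ork(\spancl(X))=\ork(X)$; this is exactly the identity that can fail for a general outer complemented but non-solvable operator, in which case $\spancl$ is not even idempotent. Isolating this rank-preservation of the span---rather than the always-available exchange property---as the true equivalent of solvability is the crux of the argument.
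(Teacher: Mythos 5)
Your proof is correct, and your reverse direction is exactly the paper's (Proposition~\ref{prop:span} gives $\cl\le\spancl$, and solvability descends to any smaller operator of equal rank). In the forward direction, however, you take a genuinely different route. The paper's argument runs: since every set is outer complemented, Corollary~\ref{cor:bounds_H} pins $H_f(X)=\urk(X)=\ork(X)$ for any solution $f$, so $\ork$ inherits submodularity from the entropy; it then concludes at once that $\spancl$ is a matroid with rank function $\ork$ (implicitly invoking the standard fact that an integer-valued, monotone, unit-increase, submodular function is a matroid rank function whose matroid closure is precisely the span), and it gets the coding-function property from the identity $\spancl(X)=\{v: H_f(X\cup v)=H_f(X)\}=\cl_f(X)$. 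You never touch submodularity: you derive the rank identity $\ork(\spancl(X))=\ork(X)$ from Corollary~\ref{cor:complemented_span}, verify the closure axioms and the Steinitz--Mac Lane exchange for $\spancl$ by hand using only the monotonicity and subadditivity of $\ork$ from Proposition~\ref{prop:irk_ork}, and transfer the solution through Lemma~\ref{prop:complemented_span} plus a refinement squeeze rather than through the exact equality $\spancl=\cl_f$. The paper's route buys brevity, at the cost of appealing to matroid rank-function theory it does not prove; your route stays entirely inside the paper's own toolkit and isolates a sharper structural point, which you state explicitly: the exchange property of $\spancl$ is automatic for every closure operator, and it is idempotence of $\spancl$ --- equivalently your rank identity --- that solvability must supply. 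That diagnosis matches the paper's example $\bar{C}_5$, where $\spancl(\{1,2\})=\{1,2,3,5\}$ has outer rank $3$, so $\spancl$ indeed fails idempotence there. Both proofs are sound; yours trades the slick submodularity shortcut for transparency about where solvability actually enters.
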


\begin{proof}
If all sets are outer complemented, then any solution $f$ of $\cl$ is also a coding function of $\spancl$ since $\spancl(X) = \{v \in V : H_f(X \cup v) = H_f(X)\}$. Since the outer rank is equal to the entropy $H_f$, it is submodular and hence $\spancl$ is a matroid whose rank function is given by the outer rank. Thus $\spancl$ has rank $r$ and $f$ is a solution for it. Conversely, if $\spancl$ is a solvable matroid with rank $r$, then we have $\cl \le \spancl$ and $\cl$ is solvable.
\end{proof}

For instance, for the undirected cycle $\bar{C}_5$ in Figure \ref{fig:C5}, $\cl_{\bar{C}_5}$ is outer complemented, though the outer rank is not submodular, hence $\spancl$ is not a matroid. As such, $\bar{C}_5$ is not solvable (its entropy is actually $2.5$ \cite{Rii07}).

We would like to emphasize that if all sets are outer complemented, then the outer rank must be submodular, i.e. the rank function of a matroid. However, this does not imply that $\cl$ should be a matroid itself. For instance, consider $\cl$ defined on $\{1,2,3\}$ as follows: $\cl(1) = 12$, $\cl(2) = 2$, $\cl(3) = 3$, $\cl(13) = \cl(23) = 123$. Then any set is inner complemented, $\cl$ is solvable (by letting $f_1 = f_2$ and $f_3$ such that $f_1 \vee f_3 = E_{A^2}$) but $\cl$ is not a matroid.

\subsection{Combining closure operators} \label{sec:combining}

In this subsection, $V_1$ and $V_2$ are disjoint sets of respective cardinalities $n_1$ and $n_2$; $\cl_1$ and $\cl_2$ are closure operators on $V_1$ with rank $r_1$ and on $V_2$ with rank $r_2$, respectively. We further let $V = V_1 \cup V_2$ and for any $X \subseteq V$, we shall denote $X_1 = X \cap V_1$ and $X_2 \cap V_2$. Different ways of combining closure operators have been proposed in \cite{Gad13}.

\begin{definition} \label{def:unions}
The disjoint, unidirectional, and bidirectional unions of $\cl_1$ and $\cl_2$ are respectively
\begin{align*}
	\cl_1 \cup \cl_2(X) &:= \cl_1(X_1) \cup \cl_2(X_2)\\
	\cl_1 \vec{\cup} \cl_2(X) &:= \begin{cases}
	V_1 \cup \cl_2(X_2) &\mbox{if } \cl_1(X_1) = V_1\\
	\cl_1(X_1) \cup X_2 &\mbox{otherwise}
	\end{cases}\\
	\cl_1 \bar{\cup} \cl_2(X) &:= \begin{cases}
	V_1 \cup \cl_2(X_2) & \mbox{if } X_1 = V_1\\
	V_2 \cup \cl_2(X_1) & \mbox{if } X_2 = V_2\\
	X & \mbox{otherwise}.
	\end{cases}
\end{align*}
\end{definition}

If $\cl$ is a closure operator on $V$ satisfying $\cl_1 \vec{\cup} \cl_2 \le \cl \le \cl_1 \cup \cl_2$, it has rank $r_1 + r_2$ and entropy $H(\cl_1) + H(\cl_2)$. We can then split the problems into two parts. In that case, we also have 
$$
	\cl_1(X_1) = \cl(X) \cap V_1 = \cl(X_1 \cup V_2) \cap V_1 = \cl(X_1) \cap V_1
$$ 
for all $X \subseteq V$, i.e. $V_2$ has no influence on $\cl(X)$ on $V_1$.

The rank of the bidirectional union is given by
$$
	r(\cl_1 \bar{\cup} \cl_2) = \min\{n_1 + r_2, n_2 + r_1\},
$$
while its entropy only satisfies the inequality
$$
	H(\cl_1 \bar{\cup} \cl_2) \le \min\{n_1 + H(\cl_2), n_2 + H(\cl_1)\}.
$$

We can determine how the four rank functions given above behave with regards to the three types of union.

\begin{proposition}
For the disjoint union, let $\cl_\cup := \cl_1 \cup \cl_2$, then
\begin{align*}
	r_\cup &= r_1 + r_2\\
	\ork_\cup(X) & = \ork_1(X_1) + \ork_2(X_2)\\
	\irk_\cup(X) & = \irk_1(X_1) + \irk_2(X_2)\\
	\urk_\cup(X) & = \urk_1(X_1) + \urk_2(X_2)\\
	\lrk_\cup(X) & = \lrk_1(X_1) + \lrk_2(X_2).
\end{align*}

For the unidirectional union, let $\cl_{\vec{\cup}} := \cl_1 \vec{\cup} \cl_2$, then
\begin{align*}
	r_{\vec{\cup}} &= r_1 + r_2\\
	\ork_{\vec{\cup}}(X) & = \min\{r_1 + \ork_2(X_2), \ork_1(X_1) + |X_2|\}\\
	\irk_{\vec{\cup}}(X) & = \min\{r_1 + \irk_2(X_2), \irk_1(X_1) + |X_2|\}\\
	\urk_{\vec{\cup}}(X) & = \urk_1(X_1) + \urk_2(X_2)\\
	\lrk_{\vec{\cup}}(X) & = \lrk_1(X_1) + \lrk_2(X_2).
\end{align*}

For the bidirectional union, let $\cl_{\bar{\cup}} := \cl_1 \bar{\cup} \cl_2$, then
\begin{align*}
	r_{\bar{\cup}} &= \min\{n_1 + r_2, n_2 + r_1\}\\
	\ork_{\bar{\cup}}(X) & = \min\{n_1 + \ork_2(X_2), n_2 + \ork_1(X_1), |X|\}\\
	\irk_{\bar{\cup}}(X) & = 
	\begin{cases}
		\min\{n_1 + r_2, n_2 + r_1\} & \mbox{if } \cl(X) = V\\
		n_1 + \irk_2(X_2) & \mbox{if } X_1 = V_1, \cl_2(X_2) \ne V_2\\
		n_2 + \irk_1(X_1) & \mbox{if } X_2 = V_2, \cl_1(X_1) \ne V_1\\
		|X| & \mbox{otherwise}
	\end{cases}\\
	\urk_{\bar{\cup}}(X) & = r_{\bar{\cup}} - \min\{n_1 - |X_1| + r_2 - \urk_2(X_2), n_2 - |X_2| + r_1 - \urk_1(X_1)\}\\
	\lrk_{\bar{\cup}}(X) & = \min\{|X_1| + \lrk_2(X_2), |X_2| + \lrk_1(X_1)\}.
\end{align*}
\end{proposition}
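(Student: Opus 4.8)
The plan is to handle all the rank-function formulas by the same device: split each candidate set along the partition $V = V_1 \cup V_2$, writing $b = b_1 \cup b_2$ with $b_i = b \cap V_i$, and read off from Definition \ref{def:unions} the constraints that the defining condition of each rank imposes on $b_1$ and $b_2$ separately. The rank equalities need little separate work: $r_{\bar{\cup}} = \min\{n_1 + r_2, n_2 + r_1\}$ is already recorded in the text, while $r_\cup = r_{\vec{\cup}} = r_1 + r_2$ follows from the characterisation (used repeatedly below) that $\cl_\cup(Z) = V$ and $\cl_{\vec{\cup}}(Z) = V$ both hold if and only if $\cl_1(Z_1) = V_1$ and $\cl_2(Z_2) = V_2$. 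For the outer and inner ranks I work from their defining minimisations; for the lower rank I first pin down which sets $Z$ satisfy $\cl(Z) = V$, and then obtain the upper rank from $\urk(X) = r - \lrk(V \setminus X)$ together with $\lrk_i(V_i \setminus X_i) = r_i - \urk_i(X_i)$.

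For the disjoint union every quantity factorises because $V_1$ and $V_2$ are disjoint ground sets: the containment $\cl_\cup(X) \subseteq \cl_\cup(b)$ is equivalent to the conjunction of $\cl_1(X_1) \subseteq \cl_1(b_1)$ and $\cl_2(X_2) \subseteq \cl_2(b_2)$, so minimising $|b_1| + |b_2|$ separates into two independent problems and yields additivity of the outer rank; the same argument with equalities gives the inner rank, and the full-closure characterisation above gives additivity of the lower rank and hence of the upper rank. The unidirectional union shares the same full-closure condition, so its lower and upper ranks coincide with the disjoint ones. Its outer rank is governed by a dichotomy: a covering set $b$ either forces $\cl_1(b_1) = V_1$ (paying $r_1$, after which only $\cl_2(b_2) \supseteq \cl_2(X_2)$ is required, costing $\ork_2(X_2)$) or keeps $\cl_1(b_1) \ne V_1$ (paying $\ork_1(X_1)$ but now obliged to contain $X_2$ elementwise, costing $|X_2|$), and the minimum of the two options is the claimed value; the inner rank is treated by the same split but demanding $\cl_{\vec{\cup}}(b) = \cl_{\vec{\cup}}(X)$ exactly.

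The bidirectional union is the most delicate, since $\cl_{\bar{\cup}}$ is the identity unless one coordinate already exhausts its part. For the outer rank there are three covering strategies --- set $b_1 = V_1$ and outer-cover $X_2$ (cost $n_1 + \ork_2(X_2)$), the symmetric choice (cost $n_2 + \ork_1(X_1)$), or keep both coordinates short so that $\cl_{\bar{\cup}}(b) = b$ and merely contain $X$ (cost $|X|$) --- and their minimum is the stated formula. The inner rank splits into four branches according to whether $\cl_{\bar{\cup}}(X) = V$, whether exactly one of $X_1 = V_1$ or $X_2 = V_2$ holds with the opposite part not fully closed, or neither; in each branch matching $\cl_{\bar{\cup}}(b) = \cl_{\bar{\cup}}(X)$ exactly pins down a unique feasible shape for $b$ (for instance, when $X_1 = V_1$ but $\cl_2(X_2) \ne V_2$ one must have $b_1 = V_1$ and $\cl_2(b_2) = \cl_2(X_2)$, giving $n_1 + \irk_2(X_2)$, because driving the second coordinate to $V_2$ would overshoot the target). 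Finally $\cl_{\bar{\cup}}(Z) = V$ holds exactly when $Z_1 = V_1$ and $\cl_2(Z_2) = V_2$, or symmetrically; restoring all of $X_1$ costs $|X_1|$ and then reaching $V_2$ costs $\lrk_2(X_2)$, so $\lrk_{\bar{\cup}}$ is the minimum of the two symmetric routes, and substituting $V \setminus X$ delivers the upper-rank formula.

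In every case the constructive, upper-bound direction is routine bookkeeping; the work lies in the matching lower bounds, which is the step I expect to be the main obstacle. The degeneracies are where care is needed: for the outer ranks one must rule out a purportedly cheap covering set that inadvertently exhausts a part, and for the inner ranks one must exploit that matching a closure requires reproducing it exactly rather than merely containing it. Inner rank is in particular not monotone, as the example of Figure \ref{fig:rk'} shows, so one cannot simply transcribe the outer-rank dichotomy with $\irk$ in place of $\ork$; each inner-rank branch must be argued on its own, verifying that the listed shapes for $b$ are simultaneously feasible, optimal, and exhaustive.
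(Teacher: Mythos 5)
Your proposal follows essentially the same route as the paper's own proof: both split every candidate set coordinate-wise along $V_1$ and $V_2$, characterise the conditions $X \subseteq \cl(o)$, $\cl(b) = \cl(X)$, and $\cl(Z) = V$ for each of the three unions, and then read off the rank formulas from their defining minimisations, with the only cosmetic difference being that you compute the lower rank first and obtain the upper rank by complementation, while the paper does the reverse. No substantive divergence or gap to report.
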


\begin{proof}
The results for the disjoint union easily follow from the definitions. We then turn to the unidirectional union. Again, we remark that $\cl_{\vec{\cup}}(X) = V$ if and only if $\cl_1(X_1) = V_1$ and $\cl_2(X_2) = V_2$; this gives the rank, the upper rank of $X$ and then its lower rank. For the upper rank, we have $X \subseteq \cl_{\vec{\cup}}(o)$ if and only if either $\cl_1(o_1) = V_1, X_2 \subseteq \cl_2(o_2)$ or $X_1 \subseteq \cl_1(o_1), X_2 \subseteq o_2$. The proof for the inner rank is similar.

For the bidirectional union, we have $X \subseteq \cl_{\bar{\cup}}(o)$ if and only if $o_1 = V_1, X_2 \subseteq \cl_2(o_2)$ or $o_2 = V_2, X_1 \subseteq \cl_1(o_1)$ or $X \subseteq o$; this yields the outer rank. The inner rank is obtained by considering each case separately. If $X=V$, then $\cl_{\bar{\cup}}(i) = \cl_{\bar{\cup}}(X) = V$ if and only if $i_1 = V_1, \cl_2(i_2) = V_2$ or $i_2 = V_2, \cl_1(i_1) = V_1$. If $X_1 = V_1$ but $\cl_2(X_2) \ne V_2$, then $\cl_{\bar{\cup}}(i) = \cl_{\bar{\cup}}(X)$ if and only if $i_1 = V_1, \cl_2(i_2) = \cl_2(X_2)$. The third case comes from symmetry. Finally, if $X_1 \ne V_1$ and $X_2 \ne V_2$, then $\cl_{\bar{\cup}}(i) = \cl_{\bar{\cup}}(X) = X$ if and only if $i = X$. For the upper rank, we remark that $\cl_{\bar{\cup}}(X \cup Y) = V$ if and only if either $X_1 \cup Y_1 = V_1, \cl_2(X_2 \cup Y_2) = V_2$ or $X_2 \cup Y_2 = V_2, \cl_1(X_1 \cup Y_1) = V_1$. The lower rank follows from the upper rank.
\end{proof}

\section{Shannon entropy} \label{sec:Shannon}

Since finding the entropy of a digraph is difficult in general, \cite{Rii06, Rii07} develops the idea of Shannon entropy of a graph. The main idea is to maximise over any function which satisfies some of the properties of an entropic function, notably submodularity. This idea can be adapted to general closure operators. For any closure operator $\cl$ on $V$, a Shannon function for $\cl$ can be viewed as a $\cl$-compatible polymatroid.

\begin{definition}
For any closure operator $\cl$ on $V$, a {\em Shannon function} for $\cl$ is a function $r : 2^V \to \mathbb{R}$ such that
\begin{enumerate}
    \item \label{it:SE_scaled} if $X \subseteq V$, then $$0 \leq r(X) \leq |X|,$$

    \item \label{it:SE_increasing} $r$ is increasing, i.e. if $X \subseteq Y \subseteq V$, then $$r(X) \leq r(Y),$$

    \item \label{it:SE_submodular} $r$ is submodular, i.e. if $X,Y \subseteq V$, then
    \begin{equation} \label{eq:submodular} \nonumber
        r(X) + r(Y) \geq r(X \cup Y) + r(X \cap Y),
    \end{equation}

    \item \label{it:SE_cl} for all $X \subseteq V$, $$r(X) = r(\cl(X)).$$
\end{enumerate}
The maximum value of $r(V)$ over all Shannon functions for $\cl$ is called the {\em Shannon entropy} of $\cl$ and is denoted by $\SE(\cl)$.
\end{definition}

Any Shannon function also satisfies the conditions of Lemma \ref{lem:G<ork}, hence $\SE(\cl) \le \ork(V) = r$. Moreover, it is clear that if $\cl_1 \le \cl_2$, then $\SE(\cl_1) \ge \SE(\cl_2)$.

\subsection{Shannon entropy and combining closure operators}

\begin{lemma} \label{lem:S_weak}
If $\cl_1 \vec{\cup} \cl_2 \le \cl \le \cl_1 \cup \cl_2$, then for any Shannon function $r$ for $\cl$, the function
$$
	r'(X) := r(X \cap V_1) + r(X \cup V_1) - r(V_1)
$$
is a Shannon function for $\cl$ such that $r'(X) = r'(X_1) + r'(X_2)$ and $r'(V) = r(V)$.
\end{lemma}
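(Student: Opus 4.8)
The plan is to first rewrite $r'$ in a form that exposes its additive structure, handle the two easy claims, and then verify the four Shannon-function axioms in increasing order of difficulty, leaving closure-compatibility for last. Since $X \cap V_1 = X_1$ and $X \cup V_1 = V_1 \cup X_2$, the definition reads $r'(X) = r(X_1) + r(V_1 \cup X_2) - r(V_1)$. Axiom \ref{it:SE_scaled} for $r$ forces $r(\emptyset) = 0$, so setting $g_1(Y) := r(Y)$ for $Y \subseteq V_1$ and $g_2(Y) := r(V_1 \cup Y) - r(V_1)$ for $Y \subseteq V_2$ gives the clean decomposition $r'(X) = g_1(X_1) + g_2(X_2)$. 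From this, $r'(X_1) = r(X_1) = g_1(X_1)$ and $r'(X_2) = r(V_1 \cup X_2) - r(V_1) = g_2(X_2)$ (using $r(\emptyset)=0$), so $r'(X) = r'(X_1) + r'(X_2)$ is immediate, as is $r'(V) = r(V_1) + r(V) - r(V_1) = r(V)$.

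Next I would dispatch the first three Shannon axioms through the decomposition. Monotonicity of $r'$ is inherited coordinatewise from that of $r$, since $X \subseteq Y$ gives $X_1 \subseteq Y_1$ and $V_1 \cup X_2 \subseteq V_1 \cup Y_2$. For the lower bound $0 \le r'(X)$, note $g_1(X_1) \ge 0$ and $g_2(X_2) = r(V_1 \cup X_2) - r(V_1) \ge 0$ by monotonicity; for $r'(X) \le |X|$ I would combine $r(X_1) \le |X_1|$ with the submodular inequality $r(V_1) + r(X_2) \ge r(V_1 \cup X_2) + r(\emptyset)$ (applied to the disjoint sets $V_1, X_2$), which yields $g_2(X_2) \le r(X_2) \le |X_2|$. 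Submodularity of $r'$ splits along the decomposition: because $(X \cup Y)_i = X_i \cup Y_i$ and $(X \cap Y)_i = X_i \cap Y_i$, it reduces to submodularity of $g_1$ and of $g_2$ separately. The former is just the restriction of the submodular $r$ to $V_1$; the latter follows from applying submodularity of $r$ to $V_1 \cup A$ and $V_1 \cup B$, using $(V_1\cup A)\cup(V_1\cup B) = V_1 \cup (A \cup B)$ and $(V_1\cup A)\cap(V_1\cup B) = V_1 \cup (A\cap B)$ for $A,B \subseteq V_2$.

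The crux — and the step I expect to be the main obstacle — is closure-compatibility $r'(X) = r'(\cl(X))$, which is where the hypothesis $\cl_1 \vec{\cup}\cl_2 \le \cl \le \cl_1 \cup \cl_2$ must be used. Via the decomposition it suffices to show $g_1(X_1) = g_1(\cl(X)_1)$ and $g_2(X_2) = g_2(\cl(X)_2)$. For the first, I would invoke the identity $\cl(X) \cap V_1 = \cl_1(X_1) = \cl(X_1) \cap V_1$ recorded in the preamble to this subsection, together with the squeeze $X_1 \subseteq \cl_1(X_1) \subseteq \cl(X_1)$: monotonicity and $r(X_1) = r(\cl(X_1))$ then pin $r(\cl_1(X_1)) = r(X_1)$. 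For the second, the key observation is that the two union bounds coincide on $V_1 \cup X_2$: since $\cl_1(V_1) = V_1$, both $\cl_1 \vec{\cup}\cl_2$ and $\cl_1 \cup \cl_2$ send $V_1 \cup X_2$ to $V_1 \cup \cl_2(X_2)$, forcing $\cl(V_1 \cup X_2) = V_1 \cup \cl_2(X_2)$ and hence $r(V_1 \cup X_2) = r(V_1 \cup \cl_2(X_2))$. Combined with the containments $X_2 \subseteq \cl(X)_2 \subseteq \cl_2(X_2)$, which again follow from sandwiching $\cl$ between its two union bounds, monotonicity squeezes $r(V_1 \cup \cl(X)_2)$ between these equal values, giving $g_2(X_2) = g_2(\cl(X)_2)$. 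The delicate points to get right are that $\cl_2(\emptyset)$ never interferes — handled because $g_2$ always adjoins the whole of $V_1$, whose presence collapses the unidirectional and disjoint unions — and the precise containments for $\cl(X) \cap V_2$, which is why I single out this final step as the genuine difficulty.
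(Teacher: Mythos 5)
Your proof is correct and follows essentially the same route as the paper's: the closure axiom is the only real content, and both arguments rest on the identity $\cl(X)\cap V_1=\cl_1(X_1)=\cl(X_1)\cap V_1$ (stated just before the lemma) together with monotonicity and closure-invariance of $r$, the paper doing a one-line squeeze $r'(\cl(X))\le r'(X)$ while you squeeze each summand of the decomposition $r'(X)=r(X_1)+r(V_1\cup X_2)-r(V_1)$ separately. The only difference is thoroughness: you verify the boundedness, monotonicity, and submodularity axioms explicitly (including the identity $\cl(V_1\cup X_2)=V_1\cup\cl_2(X_2)$), which the paper dismisses as trivial.
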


\begin{proof}
Only the closure property is nontrivial to verify. Since $\cl(X) \cap V_1 = \cl(X_1) \cap V_1$, we obtain
$$
	r'(\cl(X)) = r(\cl(X) \cap V_1) + r(\cl(X) \cup V_1) - r(V_1) \le r(\cl(X_1)) + r(\cl(X \cup V_1)) - r(V_1) = r'(X).
$$
\end{proof}

\begin{proposition}
If $\cl_1 \vec{\cup} \cl_2 \le \cl \le \cl_1 \cup \cl_2$, then
$$
    \SE(\cl) = \SE(\cl_1) + \SE(\cl_2).
$$
\end{proposition}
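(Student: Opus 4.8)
The plan is to establish the two inequalities $\SE(\cl) \ge \SE(\cl_1) + \SE(\cl_2)$ and $\SE(\cl) \le \SE(\cl_1) + \SE(\cl_2)$ separately, the first by exhibiting a good Shannon function for $\cl$ and the second by reducing an arbitrary Shannon function for $\cl$ to a pair of Shannon functions for $\cl_1$ and $\cl_2$.

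For the lower bound, I would take optimal Shannon functions $r_1$ for $\cl_1$ and $r_2$ for $\cl_2$, so that $r_i(V_i) = \SE(\cl_i)$, and glue them into $r(X) := r_1(X_1) + r_2(X_2)$. The scaling, monotonicity, and submodularity axioms transfer immediately, since the maps $X \mapsto X_1$ and $X \mapsto X_2$ commute with $\cup$ and $\cap$. The only axiom requiring work is the closure condition $r(\cl(X)) = r(X)$, and here I would use only the upper bound $\cl \le \cl_1 \cup \cl_2$: extensivity gives $X_1 \subseteq \cl(X) \cap V_1$, while $\cl \le \cl_1 \cup \cl_2$ gives $\cl(X) \cap V_1 \subseteq \cl_1(X_1)$, so sandwiching with the monotonicity of $r_1$ and the identity $r_1(\cl_1(X_1)) = r_1(X_1)$ forces $r_1(\cl(X) \cap V_1) = r_1(X_1)$, and symmetrically on $V_2$. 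This makes $r$ a Shannon function for $\cl$ with $r(V) = \SE(\cl_1) + \SE(\cl_2)$, witnessing the lower bound.

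For the upper bound, I would start from an arbitrary Shannon function $r$ for $\cl$ and apply Lemma \ref{lem:S_weak} to replace it by $r'$, which is still a Shannon function for $\cl$, splits additively as $r'(X) = r'(X_1) + r'(X_2)$, and satisfies $r'(V) = r(V)$. The goal then reduces to showing that the two restrictions $X_1 \mapsto r'(X_1)$ and $X_2 \mapsto r'(X_2)$ are Shannon functions for $\cl_1$ and $\cl_2$ respectively; once this is done, $r(V) = r'(V_1) + r'(V_2) \le \SE(\cl_1) + \SE(\cl_2)$, and maximising over $r$ concludes. The scaling, monotonicity, and submodularity axioms for the restrictions are inherited verbatim from $r'$.

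The step I expect to be the main obstacle is verifying the closure axiom for the restriction, namely $r'(\cl_1(X_1)) = r'(X_1)$ for $X_1 \subseteq V_1$, and this is precisely where \emph{both} hypotheses on $\cl$ are needed. I would isolate the key identity $\cl(X_1) \cap V_1 = \cl_1(X_1)$ for every $X_1 \subseteq V_1$: the inclusion $\subseteq$ follows from $\cl \le \cl_1 \cup \cl_2$ together with $\cl_2(\emptyset) \subseteq V_2$, while the inclusion $\supseteq$ follows from $\cl_1 \vec{\cup} \cl_2 \le \cl$ after checking that $\cl_1 \vec{\cup} \cl_2(X_1) \supseteq \cl_1(X_1)$ in both branches of the definition of the unidirectional union (in the branch $\cl_1(X_1)=V_1$ one gets $V_1 \cup \cl_2(\emptyset)$, and otherwise one gets $\cl_1(X_1)$ itself, the $V_2$-part of $X_1$ being empty). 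Granting this, I would write $r'(X_1) = r'(\cl(X_1)) = r'(\cl(X_1) \cap V_1) + r'(\cl(X_1) \cap V_2) = r'(\cl_1(X_1)) + r'(\cl(X_1) \cap V_2)$, using the closure property of $r'$ and the additive splitting. Since the last term is nonnegative, this gives $r'(\cl_1(X_1)) \le r'(X_1)$, and combining with monotonicity (from $X_1 \subseteq \cl_1(X_1)$) yields the desired equality.
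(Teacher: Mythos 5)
Your lower bound is correct and is essentially the paper's argument (the paper checks that $r_1(X_1)+r_2(X_2)$ is a Shannon function for $\cl_1\cup\cl_2$ and invokes monotonicity of $\SE$ under $\le$; your direct verification for $\cl$ is the same computation). Your upper-bound framework is also the paper's: apply Lemma \ref{lem:S_weak}, then show the two restrictions of $r'$ are Shannon functions for $\cl_1$ and $\cl_2$. The gap is that you only verify the closure axiom for the restriction to $V_1$, and the $V_2$ side cannot be handled ``symmetrically'': the unidirectional union is asymmetric, and your key identity fails on the $V_2$ side. Concretely, $\cl(X_2)\cap V_2=\cl_2(X_2)$ is false in general: take $\cl=\cl_1\vec{\cup}\cl_2$ itself (which satisfies the sandwich), assume $\cl_1(\emptyset)\ne V_1$, and let $X_2\subseteq V_2$ be any set not closed under $\cl_2$; the second branch of the definition gives $\cl(X_2)=\cl_1(\emptyset)\cup X_2$, so $\cl(X_2)\cap V_2=X_2\subsetneq\cl_2(X_2)$. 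In $\cl_1\vec{\cup}\cl_2$ the $\cl_2$-closure is only activated once $V_1$ has been closed, so no subset of $V_2$ alone can see it; your chain applied to $X_2$ only yields $r'(X_2)=r'(X_2)$ and says nothing about $r'(\cl_2(X_2))$. So the claim that both restrictions are Shannon functions is true, but half of it is unproved, and the implicit symmetric argument would be wrong.

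The gap is fixable with tools you already have, by conditioning on $V_1$ rather than intersecting with $V_2$ --- which is in substance what the paper does. Since $\cl_1(V_1)=V_1$ triggers the first branch of $\vec{\cup}$, the sandwich forces $\cl(V_1\cup X_2)=V_1\cup\cl_2(X_2)$ for every $X_2\subseteq V_2$; combining this identity with the additive splitting of $r'$ gives
$$
r'(V_1)+r'(\cl_2(X_2))=r'\bigl(V_1\cup\cl_2(X_2)\bigr)=r'\bigl(\cl(V_1\cup X_2)\bigr)=r'(V_1\cup X_2)=r'(V_1)+r'(X_2),
$$
hence $r'(\cl_2(X_2))=r'(X_2)$ as required. (For comparison: the paper writes the $V_1$-side function in conditional form, $r_1(X):=r(X\cup V_2)-r(V_2)$, verified via the mirror identity $\cl(X\cup V_2)=\cl_1(X)\cup V_2$, and declares the $V_2$-side restriction ``easily seen'' --- that ``easy'' step is exactly the computation above. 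Under additivity the conditional form coincides with your restriction, so apart from this missing $V_2$ step your proof and the paper's are the same.)
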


\begin{proof} First of all, it is clear that $\SE(\cl) \ge \SE(\cl_1) + \SE(\cl_2)$. Indeed, let $r_i$ be a Shannon function for $\cl_i$, then $r$ defined by $r(X) := r_1(X_1) + r_2(X_2)$ is a Shannon function for $\cl_1 \cup \cl_2$.

We now show the reverse inequality. By Lemma \ref{lem:S_weak}, there exists a Shannon function $r$ for $\cl$ with $r(X) = r(X_1) + r(X_2)$ and $r(V) = \SE(\cl)$. It is easily seen that the restriction $r_2(X)$ of $r(X)$ to $V_2$ is a Shannon function for $\cl_2$, hence $r_2(V_2) = r(V_2) \le \SE(\cl_2)$. Also, define the function $r_1: 2^{V_1} \to \mathbb{R}$ as
$$
    r_1(X) := r(X \cup V_2) - r(V_2).
$$
We check that $r_1$ is indeed a Shannon function for $\cl_1$. The first two properties are straightforward, while submodularity comes from
\begin{eqnarray*}
    r_1(X) + r_1(Y) &=& r(X \cup V_2) + r(Y \cup V_2) - 2r(V_2)\\
    &\ge& r((X \cup Y) \cup V_2) + r((X \cap Y) \cup V_2) - 2r(V_2)\\
    &=& r_1(X \cup Y) + r_1(X \cap Y)
\end{eqnarray*}
and the closure property comes from the fact that $\cl_1(X) = \cl(X \cup V_2) \backslash V_2$.
$$
    r_1(\cl_1(X)) = r(\cl_1(X) \cup V_2) - r(V_2) = r(\cl(X \cup V_2)) - r(V_2) = r(X \cup V_2) - r(V_2) =  r_1(X).
$$
Thus,
$$
    r(V) = r_1(V_1) + r_2(V_2) \le \SE(\cl_1) + \SE(\cl_2).
$$
\end{proof}

The Shannon entropy of the bidirectional union satisfies a similar inequality to the one for the corresponding entropy.

\begin{proposition}
For any $\cl_1$ and $\cl_2$, we have
$$
	\SE(\cl_1 \bar{\cup} \cl_2) \le \min\{\SE(\cl_1) + n_2, \SE(\cl_2) + n_1\}.
$$
\end{proposition}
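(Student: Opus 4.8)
The plan is to follow the template of Lemma~\ref{lem:S_weak} and its corollary for the unidirectional union: I would fix a Shannon function $r$ for $\cl_{\bar\cup} := \cl_1 \bar\cup \cl_2$ with $r(V) = \SE(\cl_{\bar\cup})$ and distil from it a Shannon function for $\cl_2$. Because the bidirectional union is symmetric in its two arguments (as confirmed by the symmetric rank formula $\min\{n_1+r_2,\, n_2+r_1\}$), it is enough to prove the single bound $\SE(\cl_{\bar\cup}) \le \SE(\cl_2) + n_1$; exchanging the roles of $V_1$ and $V_2$ then gives $\SE(\cl_{\bar\cup}) \le \SE(\cl_1) + n_2$, and taking the minimum yields the claim.

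Concretely, I would define $r_2 : 2^{V_2} \to \mathbb{R}$ by $r_2(X) := r(X \cup V_1) - r(V_1)$, i.e. measure each subset of $V_2$ after completing it with all of $V_1$. The normalisation bounds are routine: monotonicity of $r$ gives $r_2(X) \ge 0$, and submodularity applied to the disjoint sets $X$ and $V_1$ (together with $r(\emptyset)=0$) gives $r_2(X) \le r(X) \le |X|$. Monotonicity and submodularity of $r_2$ are inherited directly from those of $r$, using the lattice identity $(X\cup V_1)\cap(Y\cup V_1) = (X\cap Y)\cup V_1$ to handle the submodular inequality.

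The step I expect to be the crux is the closure property, just as it was in Lemma~\ref{lem:S_weak}; this is where the specific shape of the bidirectional union must be exploited. For any $X \subseteq V_2$ the completed set $X \cup V_1$ has first coordinate $(X\cup V_1)\cap V_1 = V_1$, so the first branch of Definition~\ref{def:unions} applies and $\cl_{\bar\cup}(X \cup V_1) = V_1 \cup \cl_2(X)$. Invoking the closure property of $r$ then gives $r(V_1 \cup \cl_2(X)) = r(X \cup V_1)$, which rearranges to $r_2(\cl_2(X)) = r_2(X)$. Hence $r_2$ is a genuine Shannon function for $\cl_2$, so $r_2(V_2) \le \SE(\cl_2)$.

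It then remains only to read off the bound. Since $r_2(V_2) = r(V) - r(V_1)$ and $r(V_1) \le |V_1| = n_1$ by the first Shannon axiom, I obtain $\SE(\cl_{\bar\cup}) = r(V) = r_2(V_2) + r(V_1) \le \SE(\cl_2) + n_1$. The symmetric construction $r_1(X) := r(X \cup V_2) - r(V_2)$, which invokes the second branch of Definition~\ref{def:unions} on sets containing $V_2$, yields $\SE(\cl_{\bar\cup}) \le \SE(\cl_1) + n_2$ in the same way, completing the argument.
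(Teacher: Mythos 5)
Your proposal is correct and follows essentially the same route as the paper: the identical construction $r_2(X) := r(X \cup V_1) - r(V_1)$, the identical key step that $\cl_1 \bar{\cup} \cl_2(X \cup V_1) = V_1 \cup \cl_2(X)$ (the first branch of Definition~\ref{def:unions}) forces $r_2(\cl_2(X)) = r_2(X)$, and the identical conclusion $r(V) = r_2(V_2) + r(V_1) \le \SE(\cl_2) + n_1$ followed by symmetry. The only cosmetic difference is that the paper routes the argument through an intermediate relaxation (``$V_1$-functions'', satisfying the Shannon axioms only on sets containing $V_1$), whereas you work directly with a Shannon function for $\cl_1 \bar{\cup} \cl_2$; this changes nothing essential.
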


\begin{proof}
We say a function $r: 2^V \to 2^V$ is a $V_1$-function if it satisfies all the properties of a Shannon function but only for all $X,Y$ containing $V_1$. The maximum value of $r(V)$ over any $V_1$-function, denoted as $S$, is greater than or equal to $\SE(\cl_1 \bar{\cup} \cl_2)$. Let $r$ be a $V_1$-function and consider
$$
	r_2(X) := r(X \cup V_1) - r(V_1)
$$
for all $X \subseteq V_2$. We then prove that $r_2$ is a Shannon function for $\cl_2$. Only Property \ref{it:SE_cl} is nontrivial to check: we have
$$
	r_2(\cl_2(X)) = r(\cl_2(X) \cup V_1) - r(V_1) = r(\cl_1 \bar{\cup} \cl_2(X \cup V_1)) - r(V_1) = r(X \cup V_1) - r(V_1) = r_2(X).
$$
If $r$ achieves $r(V) = S$, we obtain $S \le r_2(V_2) + r(V_1) \le \SE(\cl_2) + n_1$. Thus, $\SE(\cl_1 \bar{\cup} \cl_2) \le S \le \SE(\cl_2) + n_1$. Symmetry finished the proof.
\end{proof}

\subsection{Density of closure entropies}

We remark that any closure operator of rank at least one has entropy at least one (assign the universal partition to every vertex in $\cl(\emptyset)$ and the same partition $g$ of $A^r$ into $|A|$ parts to any other vertices). Moreover, any $D$-closure for a digraph $D$ with rank (i.e. minimum feedback vertex set size) of $2$ has entropy $2$; in fact, such closure operators are solvable over any sufficient large alphabet \cite{Gad13}. This shows that multiple unicast instances with two source-destination pairs are solvable over all sufficiently large alphabets. This proof technique cannot be generalised for general digraphs, for $\bar{C}_5$ has rank $3$ but entropy only $2.5$. Another direction could then be to consider other families of closure operators and find ``gaps'' in the entropy distribution; in particular, we may ask whether all closure operators of rank $2$ are solvable. Theorem \ref{th:entropy_dense} gives an emphatic negative answer to the last question: the set of all possible closure entropies is always dense above $1$. 

\begin{theorem} \label{th:entropy_dense}
For any $r \ge 2$ and any rational number $H$ in $(1,r]$, there exists a closure operator of rank $r$ with entropy equal to $H$.
\end{theorem}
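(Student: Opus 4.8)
The plan is to realize the prescribed rank $r$ and entropy $H$ by assembling, through disjoint unions, a few ``fractional cores'' together with free matroids. The key structural fact is that $\cl_1 \vec{\cup} \cl_2 \le \cl_1 \cup \cl_2 \le \cl_1 \cup \cl_2$, so the disjoint union $\cl_1 \cup \cl_2$ itself satisfies the hypothesis recorded just after Definition~\ref{def:unions}; hence $r(\cl_1 \cup \cl_2) = r_1 + r_2$ and $H(\cl_1 \cup \cl_2) = H(\cl_1) + H(\cl_2)$. Thus ranks and entropies add, and the achievable $(\text{rank},\text{entropy})$ pairs form an additive semigroup. The free matroid $U_{k,k}$ (in which every proper subset is closed) has rank $k$ and entropy $k$: assigning to vertex $i$ the partition of $A^k$ by its $i$-th coordinate yields $f_V = E_{A^k}$, so its entropy equals $\ork(V)=k$. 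Disjoint-unioning with $U_{k,k}$ therefore adds $(k,k)$, which lets me fix integer parts and pad the rank. It thus suffices to build, with small rank overhead, cores whose entropies supply every needed fractional part.

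For the cores I would generalise the example $\bar{C}_5$ (rank $3$, entropy $2.5$). I take a vertex-transitive symmetric digraph $G$ — an odd antihole, or a power of a cycle $C_n^{(k)}$ — whose rank is the minimum vertex cover $|V(G)| - \alpha(G)$, and I claim its entropy equals $|V(G)| - \bar{\chi}_f(G)$, where $\bar{\chi}_f$ is the fractional clique cover number. The lower bound comes from an explicit coding function built from an optimal fractional cover of $V(G)$ by cliques: over a large alphabet one chooses partitions $f_v$ under which each clique recovers its coordinates, exactly as the solution given for $\bar{C}_4$ in the text is extended, so that $H_f(V)$ attains $|V(G)| - \bar{\chi}_f(G)$. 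The matching upper bound is supplied by the Shannon entropy: every coding-function entropy is dominated by a Shannon function, and $\SE(\cl) \le \ork(V) = r$, so it is enough to prove $\SE(\cl_G) = |V(G)| - \bar{\chi}_f(G)$, which follows by symmetrising an optimal Shannon function and invoking LP duality on the program defining $\SE$. Since $\bar{\chi}_f = |V|/\omega(G)$ for vertex-transitive $G$, varying $n$ and $k$ in $C_n^{(k)}$ lets $\bar{\chi}_f$ run through a dense set of rationals, so the cores realise arbitrarily fine fractional parts.

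Finally I would assemble: write $H = m + \theta$ with $m$ an integer and $\theta$ a fractional part, pick cores whose entropies sum to the correct fractional and integer contributions, then add copies of $U_{1,1}$ to correct the remaining integer part and raise the total rank to exactly $r$. The main obstacle is twofold. First, the exact entropy of a core: the fractional-clique-cover lower bound and the Shannon upper bound must be shown to coincide at the prescribed rational, i.e. one must prove $\SE(\cl_G) = |V(G)| - \bar{\chi}_f(G)$, the inequality $\le$ being the delicate direction. Second, and I expect this to be the harder point, the coverage bookkeeping: because entropy is always $\ge 1$ while disjoint union only adds, realising an $H$ close to $1$ at a large rank $r$ (equivalently a large gap $r-H$) forces cores with a large gap but minimal rank, and one must control this overhead so that the cores still fit within the rank budget for \emph{every} rational $H \in (1,r]$ — including small $r$ with large-denominator $H$, such as $r=2$ and $H = 1 + 1/q$. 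A possible way to sidestep the graph theory entirely is to design a closure operator whose defining Shannon-entropy linear program has optimum exactly $H$ and then exhibit a partition solution meeting it; the difficulty is then concentrated wholly in that matching lower bound.
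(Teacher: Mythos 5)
Your assembly strategy has a structural flaw that no amount of work on the cores can repair, and it is sharpest in exactly the case you flag. Disjoint unions split the rank as an integer sum of the ranks of the pieces; any piece of rank $0$ has entropy $0$, and any piece of rank $1$ has entropy exactly $1$ (at most $1$ by Corollary \ref{cor:H<ork}, at least $1$ by the remark opening the density subsection of Section \ref{sec:Shannon}). So for $r=2$ and $H = 1 + 1/q$ the only possible decomposition is a single rank-$2$ core with entropy $1+1/q$. But your cores are closure operators of (undirected, hence bidirected) graphs, and the paper states that every digraph closure of rank $2$ is solvable over all sufficiently large alphabets, i.e.\ has entropy $2$; indeed Theorem \ref{th:entropy_dense} is presented precisely as showing that \emph{non-graph} closure operators of rank $2$ can fail to be solvable. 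So no graph core, vertex-transitive or not, can realise a non-integer entropy at rank $2$. The failure is not confined to $r=2$: if $G$ is vertex-transitive, hence $d$-regular, and has rank $n - \alpha(G) = k \le r$, then counting edge endpoints in the complement of a maximum independent set gives $nd/2 = |E| \le kd$, so $n \le 2k \le 2r$. Thus only finitely many of your cores fit inside any fixed rank budget $r$, and since entropy adds over disjoint unions, your scheme realises only \emph{finitely many} entropy values at each rank --- it cannot be dense in $(1,r]$. This is why the paper abandons graphs and matroids entirely and constructs a bespoke closure operator (the ancestry closure on a forest of $(L_t,C_t)$-trees), proving $H_f(V)\le H$ by iterating submodularity of $H_f$ level by level and exhibiting an explicit coding function, via coordinate sets $S(v)\subseteq\{1,\ldots,DH\}$, that attains $H$.

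Separately, even where your cores could in principle contribute, the claim they rest on --- that the entropy of a vertex-transitive graph closure equals $|V(G)| - \bar{\chi}_f(G)$, with the Shannon entropy supplying a matching upper bound --- is asserted rather than proven. ``Symmetrising an optimal Shannon function and invoking LP duality'' is a sketch of a substantial theorem from the guessing-number literature; nothing in this paper (Corollary \ref{cor:H<ork}, Lemma \ref{lem:lr_ur}, or the Shannon machinery of Section \ref{sec:Shannon}) yields the inequality $\SE(\cl_G) \le |V(G)| - \bar{\chi}_f(G)$, and the paper itself only records the single data point $H(\cl_{\bar{C}_5}) = 2.5$ by citation. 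So the proposal reduces the theorem to an external claim that is both unproven here and, by the counting argument above, insufficient for the stated conclusion even if granted.
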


The proof is constructive, i.e. for any $H$, we give a closure operator with entropy equal to $H$ and the corresponding coding functions with entropy $H$.

First of all, we introduce some notation regarding rooted trees. A rooted tree is a tree with a specific vertex, called the root, denoted as $R$. The vertices at distance $k$ from the root form level $k$ of the tree (hence the root is the only vertex on level $0$), this is denoted $l_k$. For any vertex $v$ in level $k$, its parent is the only vertex adjacent to $v$ on level $k-1$; we denote it as $p(v)$. Moreover, we denote its ancestry as $a(v) := \{v, p(v), p^2(v), \ldots, R\}$ (remark that we include $v$ in its ancestry). Conversely, a child of $v$ is any vertex on level $k+1$ adjacent to $v$, any vertex without any children is a leaf of the tree. We denote the set of children of $v$ as $c(v)$. We extend the definitions above to any set of vertices $X$, e.g. $p(X) = \bigcup_{v \in X} p(v)$. The following properties easily follow.
\begin{enumerate}
	\item \label{it:aaX} If $u \in a(v)$, then $a(u) \subseteq a(v)$. Therefore, $a(a(X)) = a(X)$ for all $X$.

	\item \label{it:aX_in_aY} If $X \subseteq Y$, then $a(X) \subseteq a(Y)$.
	
	\item \label{it:cv_in_aX} $c(v) \subseteq a(X)$ only if $|X| \ge |c(v)|$.
\end{enumerate}

\begin{definition}
An $(L,C)$-tree, with $0 \le L \le C$ is a rooted tree with root $R$ with $L+1$ levels, such that any vertex of level $k$ has $C-k$ children for $0 \le k \le L-1$. If $L=0$, this tree reduces to a single vertex.
\end{definition}

We then have:
\begin{enumerate}
	\setcounter{enumi}{3}

	\item \label{it:children} Each vertex of level $L-1$ has $C-L+1$ children (which are leaves).

	\item \label{it:lk} For all $k$, $|l_k| = \frac{C!}{(C - k)!}$.
\end{enumerate}
 
We can express the rational number $H$ as 
$$
	H = \frac{1}{D} \sum_{t=1}^r N_t,
$$ 
where $D \ge \frac{r-1}{H-1}$, $0 < N_t < D$ for all $1 \le t \le r-1$, and $N_r = D$. We then introduce the following closure operator. Consider $r$ disjoint trees $T_1,\ldots,T_r$, where $T_t$ is an $(L_t := D-N_t , C_t := DH - N_t )$-tree with root $R_t$ for all $1 \le t \le r$. Then $V$ is the set of all vertices of the $r$ trees and
$$
	\cl(X) := \begin{cases}
	V & \mbox{if } \exists v : c(v) \subseteq a(X) \, \mbox{or}\, X \cap T_t \ne \emptyset \,\forall 1 \le t \le r\\
	a(X) & \mbox{otherwise}.
	\end{cases}
$$

\begin{lemma} \label{lem:closure}
The operator $\cl$ is a closure operator of rank $r$.
\end{lemma}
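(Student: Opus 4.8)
The plan is to verify directly that $\cl$ satisfies the three closure axioms (extensive, isotone, idempotent) and then compute its rank.

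First I would establish the \emph{extensive} property. The definition splits into two cases. If we are in the first case, then $\cl(X) = V \supseteq X$ trivially. In the second case, $\cl(X) = a(X)$, and since each vertex lies in its own ancestry (i.e. $v \in a(v)$), we have $X \subseteq a(X) = \cl(X)$. So extensiveness is immediate from property~\ref{it:aaX} of ancestries.

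Next I would handle the \emph{isotone} property. Suppose $X \subseteq Y$. The key observation is that the "trigger" condition for $\cl = V$ is monotone in $X$: if $\exists v$ with $c(v) \subseteq a(X)$, then since $a(X) \subseteq a(Y)$ by property~\ref{it:aX_in_aY}, the same $v$ witnesses $c(v) \subseteq a(Y)$; and if $X$ meets every tree $T_t$, so does $Y$. Hence if $\cl(X) = V$ then $\cl(Y) = V$, giving $\cl(X) = V \subseteq \cl(Y)$. In the remaining case both closures are ancestries, and $a(X) \subseteq a(Y)$ again gives $\cl(X) = a(X) \subseteq a(Y) \subseteq \cl(Y)$ (whether or not $Y$ triggers the full set). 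So isotonicity reduces entirely to the monotonicity of the trigger condition together with property~\ref{it:aX_in_aY}.

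The \emph{idempotent} property is where I expect the main obstacle, because one must check $\cl(a(X)) = a(X)$ whenever $X$ does not trigger $V$; the danger is that passing from $X$ to its ancestry $a(X)$ could newly satisfy the trigger condition. By property~\ref{it:aaX}, $a(a(X)) = a(X)$, so the ancestry is stable, and the genuine worry is only whether $a(X)$ now contains some $c(v)$ or meets every tree. Meeting every tree is unaffected by taking ancestries within the same disjoint trees (an ancestry stays inside its tree), so that sub-condition is unchanged. The crucial point is the $c(v) \subseteq a(X)$ condition: I would argue that $a(a(X)) = a(X)$ means no \emph{new} ancestors are created, so if $c(v) \not\subseteq a(X)$ originally, it remains so after closure; thus $\cl(a(X)) = a(X)$ and idempotence holds. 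If $\cl(X) = V$, then $\cl(\cl(X)) = \cl(V) = V$ since $V$ trivially triggers the full-set case. I would want to confirm carefully that replacing $X$ by $a(X)$ cannot enlarge the set of $v$ with $c(v) \subseteq a(X)$, which follows precisely because $a(\cdot)$ is idempotent.

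Finally I would compute the rank $r = \min\{|b| : \cl(b) = V\}$. A natural basis is obtained by selecting one root from each of the $r$ trees: the set $\{R_1,\ldots,R_r\}$ meets every $T_t$, so $\cl$ of it is $V$, giving $r \le r$. For the lower bound I would show no set of size $r-1$ triggers $V$: such a set cannot meet all $r$ disjoint trees, and I would use property~\ref{it:cv_in_aX}, namely $c(v) \subseteq a(X)$ forces $|X| \ge |c(v)|$, to argue that with the trees' branching parameters $C_t = DH - N_t$ chosen large (in particular each $|c(v)| \ge C_t - L_t + 1 = N_t + (\text{something}) \ge$ a value exceeding $r-1$), no set of size $r-1$ can contain all children of any vertex. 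The main calculation here is verifying that the smallest child-count across all trees strictly exceeds $r-1$, so that the only way a small set triggers $V$ is by meeting all trees, which requires size $\ge r$. This pins the rank at exactly $r$.
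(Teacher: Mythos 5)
Your proposal is correct and follows essentially the same route as the paper's own proof: verify the three axioms via the ancestry properties (isotonicity from monotonicity of the trigger condition, idempotence from $a(a(X))=a(X)$ together with the fact that ancestries stay inside their trees), then pin the rank by using the roots for the upper bound and the disjointness of the trees plus the child-count property for the lower bound. The one calculation you left pending closes exactly as in the paper: the minimum child count is $C_t - L_t + 1 = (DH - N_t) - (D - N_t) + 1 = D(H-1)+1 \ge r$ since $D \ge \frac{r-1}{H-1}$ (note the $N_t$ cancels, so it is not ``$N_t$ plus something''), and the one case you leave implicit --- $\cl(X) = a(X) = V$ with neither trigger holding --- forces $X$ to meet every tree because ancestries stay within their trees, so it is absorbed by your size-at-least-$r$ bound, which is precisely the paper's third case.
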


\begin{proof}
We first prove that this is indeed a closure operator. First of all, it is trivial to check that $X \subseteq \cl(X)$. Secondly, if $X \subseteq Y$, then we need to check that $\cl(X) \subseteq \cl(Y)$. If $\cl(Y) = V$, this is trivial, otherwise $\cl(Y) = a(Y) \ne V$ and hence $\cl(X) = a(X) \subseteq \cl(Y)$ by Proposition \ref{it:aX_in_aY}. Thirdly, we need to prove that $\cl$ is idempotent. Again, this is trivial if $\cl(X) = V$, hence let $\cl(X) = a(X) \ne V$. By definition, there exists $t$ such that $X \cap T_t = \emptyset$, and hence $a(X) \cap T_t = \emptyset$. Also, for any non-leaf $v$, there exists a child $u$ of $v$ which does not belong to $a(X)$; then $u$ does not belong to $a(a(X)) = a(X)$ either. Therefore, we have $\cl(a(X)) = a(a(X)) = a(X)$ by Property \ref{it:aaX} and hence $\cl(\cl(X)) = \cl(a(X)) = a(X) = \cl(X)$.

We now prove that it has rank $r$. Since the set of roots has cardinality $r$ and intersects all trees, the rank is at most $r$. Conversely, suppose $\cl(X) = V$. Firstly, if there exists $v$ such that $c(V) \subseteq a(X)$, then $|X| \ge |c(v)|$ by Property \ref{it:cv_in_aX}; thus $|X| \ge C_t - L_t + 1 = D(H-1) + 1 \ge r$. Secondly, if $X$ intersects all trees, then $|X| \ge r$. Thirdly, if $\cl(X) = a(X)$, then $a(X) = V$ and $X$ intersects all trees; thus $|X| \ge r$. 
\end{proof}

\begin{lemma} \label{lem:entropy<H}
The entropy of $\cl$ is at most $H$.
\end{lemma}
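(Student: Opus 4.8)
The plan is to bound $H_f(V)$ by $H$ for every coding function $f$ over every alphabet, which suffices since the entropy of $\cl$ is the supremum of $H_f(V)$ over all such $f$. The key input is that, reading each partition $f_v$ as the random variable recording which part contains a uniformly chosen element of $A^r$, the set function $H_f$ is the joint Shannon entropy of these variables; hence $H_f$ is monotone and submodular, with $H_f(v) \le 1$ and $H_f(X) = H_f(\cl(X))$. Throughout I write $Z := H_f(V)$ and, for a vertex $v$, $e(v) := H_f(v)$.

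First I would record three structural consequences of $\cl$. Since for a single vertex $\cl(v) = a(v)$ (the siblings forming any $c(w)$ number at least two, so no $c(w)$ fits inside the path $a(v)$, and one vertex cannot meet all $r$ trees), we get $H_f(a(v)) = e(v)$; in particular a child determines its parent, so $H_f(\{u,v\}) = e(u)$ and $e(u) \ge e(v)$ whenever $v = p(u)$. Next, taking $X = c(v)$ gives $c(v) \subseteq a(c(v)) = a(X)$, so $\cl(c(v)) = V$ and hence $H_f(c(v)) = Z$ for every non-leaf $v$. Finally, the set of all roots meets every tree, so $\cl(\{R_1,\dots,R_r\}) = V$ and subadditivity yields $Z \le \sum_{t=1}^r e(R_t)$.

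The heart of the argument is a per-tree recursion. Fix $T_t$. For a non-leaf $v$ at level $k$, conditional subadditivity (a consequence of submodularity) applied to its $C_t - k$ children, together with $H_f(c(v)) = Z$ and the fact that each child determines $v$, gives $Z - e(v) \le \sum_{u \in c(v)}(e(u) - e(v))$. Summing over all $v \in l_k$ and setting $A_k := \sum_{v \in l_k} e(v)$ produces $A_{k+1} \ge |l_k| Z + (C_t - k - 1) A_k$. Passing to the averages $B_k := A_k / |l_k|$ turns this into $B_{k+1} \ge \frac{Z}{C_t - k} + \frac{C_t - k - 1}{C_t - k} B_k$, whose solution (checked against the constant particular solution $B_k = Z$) is $B_k \ge Z + (B_0 - Z)\frac{C_t - k}{C_t}$ with $B_0 = e(R_t)$. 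Evaluating at the leaf level $k = L_t$, where $B_{L_t} \le 1$, yields the clean inequality $1 \ge \frac{L_t}{C_t} Z + \frac{C_t - L_t}{C_t} e(R_t)$, i.e. $e(R_t) \le \frac{C_t - L_t Z}{C_t - L_t}$.

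To finish, I would substitute into $Z \le \sum_t e(R_t)$ and use the parameters. Since $C_t - L_t = D(H-1)$ for every $t$, while $\sum_t C_t = DH(r-1)$ and $\sum_t L_t = D(r-H)$ (all from $\sum_t N_t = DH$), the sum collapses to $Z \le \frac{H(r-1) - Z(r-H)}{H-1}$, which rearranges to $(r-1)Z \le (r-1)H$, i.e. $Z \le H$. I expect the main obstacle to be setting up and solving the level recursion correctly — in particular the conditional-subadditivity step and the bookkeeping of $|c(v)| = C_t - k$ — after which the final combination is just arithmetic in the chosen $L_t, C_t, N_t$. A minor point to dispatch is the degenerate tree $T_r$ with $L_r = 0$ (a single leaf root), where the recursion is vacuous and the bound $e(R_r) \le 1$ holds directly.
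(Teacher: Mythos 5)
Your proof is correct and follows essentially the same route as the paper: the identical submodular inequality at each non-leaf vertex, aggregated level by level into the same per-tree bound $L_t H_f(V) \le C_t - (C_t - L_t)H_f(R_t)$, then combined across trees via the roots-form-a-basis subadditivity. The only differences are presentational — you solve the level recursion in normalized averages $B_k$ rather than by the paper's factorial telescoping, and you fold the degenerate tree $T_r$ into the same formula instead of treating the basis inequality separately.
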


\begin{proof}
The proof uses the submodular inequality recursively on all levels of a tree, and then successively for all trees. More precisely, we shall use the following application of the submodular inequality: if $r: 2^V \to 2^V$ is submodular and $X_1,\ldots,X_k$ are subsets of $V$ such that $X_i \cap X_j = X$ for all $i \ne j$ and $\bigcup_i X_i = Y$, we have
$$
	r(Y) \le \sum_{i=1}^k r(X_i) - (k-1)r(X).
$$

Fix a coding function $f$ for $\cl$. For any non-leaf $v$, the submodular inequality gives (with the sets $X_1,\ldots,X_k$ corresponding to $\{u,v\} : u \in c(v)$ and hence $X = v$, $Y = v \cup c(v)$)
$$
	H_f(V) = H_f(c(v)) \le \sum_{u \in c(v)} H_f(u) - (|c(v)| - 1) H_f(v).
$$
We first add up by level; for level $k$ ($0 \le k \le L_t$) of tree $T_t$ we denote $H_k := \sum_{v \in l_k} H_f(v)$ and we obtain
\begin{align*}
	\frac{C_t!}{(C_t-k)!} H_f(V) &\le H_{k+1} - (C_t-k-1) H_k\\
	\frac{C!}{(C_t-k)!} (C_t-k-2)! H_f(V) &\le (C_t-k-2)! H_{k+1} - (C_t-k-1)! H_k
\end{align*}
Let us now add up for all levels:
\begin{align*}
	\left[ \sum_{k=0}^{L_t-1} \frac{C_t!}{(C_t-k)(C_t-k-1)}\right] H_f(V) 
	&\le (C_t-L_t-1)! H_{L_t} - (C_t-1)! H_0\\ 
	&\le \frac{C_t!}{C_t-L_t} - (C_t-1)! H_0,
\end{align*}
where we used $H_{L_t} \le |l_{L_t}| = \frac{C_t!}{(C_t - L_t)!}$. Simplifying, we obtain
$$
	L_t H(V) \le C_t - (C_t-L_t)H_0 = C_t - D(H-1) H_f(R_t)
$$
since by definition $H_0 = H_f(R_t)$.We now add up for all trees $T_1$ up to $T_{r-1}$, we obtain
\begin{equation} \label{eq:H1}
	\left[ D(r-1) - D(H-1)\right] H(V) \le DH(r-1) - D(H-1) - D(H-1) \sum_{t=1}^{r-1} H_f(R_t),
\end{equation}
where we used the following relations:
\begin{align*}
	\sum_{b=1}^{r-1} N_t &= D(H-1)\\
	\sum_{b=1}^{r-1} L_t &= D(r-1) - D(H-1)\\
	\sum_{b=1}^{r-1} C_t &= DH(r-1) - D(H-1).
\end{align*}
Moreover, the set of all roots is a basis for the closure operator, hence 
\begin{equation}\label{eq:H2}
	H_f(V) = H_f(\{R_1,\ldots,R_r\}) \le 1 + \sum_{t=1}^{r-1} H_f(R_t).
\end{equation}
Multiplying \eqref{eq:H2} by $D(H-1)$ and adding it with \eqref{eq:H1} eventually yields
$$
	H_f(V) \le \frac{DH(r-1)}{D(r-1)} = H.
$$
\end{proof}

We now construct the coding function with entropy $H$. Consider $A = B^D$, where $B$ is any finite set of cardinality at least $2$. Any $x \in A^r$ can be expressed as $x = (x_1,\ldots,x_{rD}) \in B^{rD}$. For any subset $S \subseteq \{1,\ldots,rD\}$, say $S = \{i_1,\ldots,i_{|S|}\}$ once sorted in increasing order, we define the partition $g_S$ of $B^{rD}$ into exactly $|B^{|S|}|$ parts of equal size as 
$$
	P_y(g_S) := \{(x_{i_1}, \ldots, x_{i_{|S|}}) = y : y \in B^{|S|}\}.
$$ 
We remark that $H(g_S) = |S|/D$. We shall assign a partition $f_v := g_{S(v)}$ to each vertex $v$; we only need to specify $S(v)$ for all $v$. Denoting $S(X) = \bigcup_{v \in X} S(v)$ for all $X \subseteq V$, we have $H_f(X) = |S(X)|/D$.

$S(v)$ is defined recursively for all trees, level by level. The set $S(R_t)$ for the root of tree $T_t$ ($1 \le t \le r$) is given by
$$
	S(R_t) = \left\{1 + \sum_{s=1}^{t-1} N_s, \ldots, \sum_{s=1}^t N_s \right\}.
$$
We denote $\Sigma := \{1,\ldots,DH\}$. Then, for any non-leaf $v$, the corresponding sets of its children are obtained by adding one element of $\Sigma$ to $S(v)$; all added elements are distinct. That is, for all $u,u' \in c(v)$, we have
\begin{align*}
	S(u) &\subseteq \Sigma\\
	|S(u)| &= |S(v)| + 1\\
	S(u') \cap S(u) &= S(v).
\end{align*}
Let $v$ be a non-leaf on level $k$. Since $|S(v)| = N_t + k$ and $|c(v)| = C_t - k = DH - N_t - k = |\Sigma| - |S(v)|$, we obtain $S(c(v)) = \Sigma$ for all non-leaves $v$.

\begin{lemma} \label{lem:coding_function}
The partitions $f$ form a coding function for $\cl$ with entropy $H$.
\end{lemma}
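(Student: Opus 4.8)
The plan is to establish two things for the family $f = (f_v)_{v \in V}$ defined by $f_v = g_{S(v)}$: first, that $f$ is a genuine coding function for $\cl$ (that is, each $f_v$ has at most $|A|$ parts and $f_X = f_{\cl(X)}$ for all $X$), and second, that $H_f(V) = H$. Since $H_f(X) = |S(X)|/D$ by construction, and Lemma \ref{lem:entropy<H} already gives the upper bound $H(\cl) \le H$, the payoff step is simply to exhibit that $H_f(V) = H$, which together with the upper bound pins the entropy exactly at $H$.

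First I would verify the bounded-parts condition: each $f_v = g_{S(v)}$ has $|B|^{|S(v)|}$ parts, and since $S(v) \subseteq \{1,\ldots,rD\}$ with $|S(v)| \le D$ along every branch (the root contributes $N_t < D$ elements and each level adds exactly one, up to level $L_t = D - N_t$, so $|S(v)| \le N_t + (D-N_t) = D$), we get at most $|B|^D = |A|$ parts. Next, the crucial structural identity is $H_f(X) = |S(X)|/D$, so the closure condition $f_X = f_{\cl(X)}$ reduces to checking $S(X) = S(\cl(X))$ for all $X$. When $\cl(X) = a(X)$, this follows from the ancestry property: since $S(u) \subseteq S(v)$ whenever $u \in a(v)$ (each child's set contains its parent's), we have $S(a(X)) = S(X)$. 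When $\cl(X) = V$, I must show $S(X) = S(V) = \{1,\ldots,rD\}$; this is where the two triggering conditions for $\cl(X)=V$ each force $S(X)$ to be the full ground set. If some $c(v) \subseteq a(X)$, then because $S(c(v)) = \Sigma = \{1,\ldots,DH\}$ for every non-leaf $v$ (established just before the lemma) and the ancestor-blocks $S(R_1),\ldots,S(R_t)$ tile $\{1,\ldots,\sum_{s\le t}N_s\}$, I would trace how the block of tree $T_t$ together with $\Sigma$ recovers all $rD$ coordinates; if instead $X$ meets every tree $T_t$, then $X$ contains a vertex in each tree whose $S$-set includes the distinct root-block $S(R_t)$, and the union of all $r$ root-blocks is exactly $\{1,\ldots,\sum_{t=1}^r N_t\} = \{1,\ldots,D + \sum_{t<r}N_t\}$, which I would combine with the $\Sigma$-contributions along branches to cover $\{1,\ldots,rD\}$.

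The main obstacle, and the step requiring genuine care, is this last coordinate-counting verification that $\cl(X)=V$ forces $S(X) = \{1,\ldots,rD\}$ in both triggering cases, and conversely that $\cl(X) \ne V$ forces $|S(X)| < rD$ so that $H_f(V)=H$ is attained only at the top. The delicate point is that the index sets $\Sigma = \{1,\ldots,DH\}$ used within each tree and the root-blocks $S(R_t)$ must be shown to interlock correctly: one has to confirm that the coordinates $\{1,\ldots,rD\}$ are exhausted precisely when the closure jumps to $V$, using the arithmetic relations $\sum_{t=1}^r N_t = DH$, $C_t = DH - N_t$, and $L_t = D - N_t$. I would finish by evaluating $H_f(V) = |S(V)|/D = rD/D \cdot$ (corrected to the intended value) $= H$; more precisely, since the set of all roots $\{R_1,\ldots,R_r\}$ is a basis with $S(\{R_1,\ldots,R_r\}) = \bigcup_t S(R_t)$ of size $\sum_t N_t = DH$, we get $H_f(V) = DH/D = H$, matching the upper bound from Lemma \ref{lem:entropy<H} and completing the proof that the entropy equals $H$.
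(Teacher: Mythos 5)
There is a genuine gap, and it comes from a misreading of what the construction is supposed to achieve. You repeatedly assert that when $\cl(X) = V$ you must show $S(X) = S(V) = \{1,\ldots,rD\}$, and your plan is to combine the root-blocks with the $\Sigma$-contributions along branches ``to cover $\{1,\ldots,rD\}$''. This is both impossible and not what is needed: by construction every $S(v)$ is contained in $\Sigma = \{1,\ldots,DH\}$ (the root blocks $S(R_1),\ldots,S(R_r)$ tile $\Sigma$, and passing from a parent to a child only ever adds elements of $\Sigma$), so $S(Y) \subseteq \Sigma$ for every $Y \subseteq V$, and $|\Sigma| = DH = \sum_t N_t < rD$ since $N_t < D$ for $t \le r-1$. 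In particular $S(V) = \Sigma$, not $\{1,\ldots,rD\}$: the coding function deliberately has entropy $H_f(V) = |\Sigma|/D = H$ rather than $r$ --- it is a coding function realising entropy $H$, not a solution, and the coordinates $DH+1,\ldots,rD$ of $A^r = B^{rD}$ are simply never used by any vertex. Your ``main obstacle'' paragraph therefore describes a verification that cannot succeed as stated, and your own final computation betrays the inconsistency: you first write $H_f(V) = |S(V)|/D = rD/D$ ``(corrected to the intended value)'' $= H$, but $rD/D = r \ne H$ in general, and no correction is available if one insists that $S(V) = \{1,\ldots,rD\}$.

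The verification the paper actually performs is much simpler precisely because the target set is $\Sigma$. For the closure condition one needs $S(X) = S(\cl(X))$; when $\cl(X) = V$ it suffices to show $S(X) \supseteq \Sigma$, equality then being automatic since everything lies inside $\Sigma$. Both triggering cases give this at once: if $c(v) \subseteq a(X)$ for some non-leaf $v$, then $\Sigma = S(c(v)) \subseteq S(a(X)) = S(X)$; if $X$ meets every tree, then choosing $v_t \in X \cap T_t$ gives $S(R_t) \subseteq S(v_t)$, hence $S(X) \supseteq \bigcup_t S(R_t) = \Sigma$. Then $H_f(V) = |S(V)|/D = DH/D = H$, and Lemma \ref{lem:entropy<H} supplies the matching upper bound. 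Note also that the ``converse'' you worry about --- that $\cl(X) \ne V$ should force $|S(X)| < rD$ --- is not required anywhere: a coding function only needs the equalities $f_X = f_{\cl(X)}$, and the lemma only needs the value of $H_f(V)$. Your proposal does handle the easy parts correctly (the bound $|S(v)| \le N_t + L_t = D$, hence at most $|A| = |B|^D$ parts, and the case $\cl(X) = a(X)$ via monotonicity of $S$ along ancestries), but the heart of the lemma --- the case $\cl(X) = V$ and the evaluation of $H_f(V)$ --- is built on the wrong target set, so the argument as planned fails.
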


\begin{proof}
Let us prove that it is indeed a coding function for $\cl$. Since $|S(v)| = N_t + k \le N_t + L = D$ for any $v$ in level $k$ of tree $t$, we obtain $H_f(v) \le 1$ for any $v \in V$. We then need to check that $S(X) = S(\cl(X))$ for any subset $X$ of vertices. We first remark that if $v \in a(u)$, then $S(v) \subseteq S(u)$, hence $S(X) = S(a(X))$. This proves the claim when $\cl(X) = a(X)$. Otherwise, if $c(v) \subseteq a(X)$, then $\Sigma = S(c(v))  \subseteq S(a(X)) \subseteq S(X)$; if $X$ intersects all trees, then $\Sigma = S(\{R_1,\ldots,R_r\}) \subseteq S(X)$. Finally, we have $S(V) = \Sigma$, hence its entropy is equal to $H$.
\end{proof}

\section{Solvability of operators} \label{sec:beyond}

In this section, we justify why we only need to focus on the closure solvability problem. Let us consider the most general way of defining the solvability problem.

\begin{definition} \label{def:coding_function}
Let $V$ be a finite set of $n$ elements, $a: 2^V \to 2^V$, and $A$, $B$ be finite sets ($A$ is referred to as the alphabet, $|A| \ge 2$). A {\em coding function} for $(a,A,B)$ is a tuple $f= (f_1,\ldots,f_n)$ of $n$ partitions of $B$, where each partition is in at most $|A|$ parts, such that $f_{a(X)} = f_X$  for all $X \subseteq V$.
\end{definition}

We say that $a,a': 2^V \to 2^V$ are {\em equivalent} if any tuple of partitions $f$ is a coding function of $a$ if and only if it is a coding function for $a'$.

\begin{theorem} \label{th:cl_equivalent}
Let $a : 2^V \to 2^V$, then there exists a closure operator on $V$ which is equivalent to $a$.
\end{theorem}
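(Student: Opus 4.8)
The plan is to translate the coding-function condition for $a$ into the language of closure operators and then realise the required operator as a meet. The starting observation is that \emph{every} tuple $f$ of partitions of any set $B$ induces a closure operator $\cl_f$ on $V$ by $\cl_f(X):=\{v\in V: f_{X\cup v}=f_X\}$; the same verification as for the construction $\cl_f$ recalled in Section~\ref{sec:preliminaries} shows $\cl_f$ is extensive, isotone and idempotent, and that $f_{\cl_f(X)}=f_X$ for all $X$. Unwinding the definition, $v\in\cl_f(X)$ exactly when $f_X$ refines $f_v$, so for a \emph{closure operator} $\cl$ one has $f_{\cl(X)}=f_X$ for all $X$ if and only if $\cl(X)\subseteq\cl_f(X)$ for all $X$, i.e. $\cl\le\cl_f$ in the order of Section~\ref{sec:closure}. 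Thus $f$ is a coding function for $\cl$ precisely when $\cl\le\cl_f$.

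Next I would characterise coding functions of the arbitrary map $a$ in the same language. Using $f_{\cl_f(Y)}=f_Y$ together with idempotence, a short argument shows that $f_{a(X)}=f_X$ holds if and only if $a(X)\subseteq\cl_f(X)$ and $X\subseteq\cl_f(a(X))$. Write $\mathcal{F}_a$ for the family of all coding functions of $a$; it is nonempty since the all-universal tuple always qualifies (then $f_Y$ is the universal partition for every $Y$). The content of the two inclusions is that, for every $f\in\mathcal{F}_a$ and every $X$, both $a(X)\subseteq\cl_f(X)$ and $X\subseteq\cl_f(a(X))$. I would then define the candidate operator as the pointwise intersection
$$
\cl(X):=\bigcap_{f\in\mathcal{F}_a}\cl_f(X),\qquad X\subseteq V.
$$
Because $V$ is finite this is a finite intersection of subsets, and extensivity, isotonicity and idempotence each pass to intersections, so $\cl$ is again a closure operator; it is the meet $\bigwedge_{f\in\mathcal{F}_a}\cl_f$.

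It then remains to verify $\mathcal{F}_{\cl}=\mathcal{F}_a$. For $\mathcal{F}_a\subseteq\mathcal{F}_{\cl}$, if $g\in\mathcal{F}_a$ then $\cl_g$ is one of the terms in the meet, so $\cl\le\cl_g$ and hence $g$ codes $\cl$. The reverse inclusion is the crux: suppose $g$ codes $\cl$, i.e. $\cl\le\cl_g$. Intersecting the inclusions of the previous paragraph over all $f\in\mathcal{F}_a$ yields $a(X)\subseteq\cl(X)$ and $X\subseteq\cl(a(X))$ for every $X$; composing with $\cl\le\cl_g$ gives $a(X)\subseteq\cl_g(X)$ and $X\subseteq\cl_g(a(X))$, which by the characterisation means $g_{a(X)}=g_X$ for all $X$, i.e. $g\in\mathcal{F}_a$.

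I expect the main obstacle to be exactly this reverse direction. It succeeds only because $\cl$ is built as the meet over \emph{all} of $\mathcal{F}_a$, so that the two defining inclusions $a(X)\subseteq\cl(X)$ and $X\subseteq\cl(a(X))$ survive the intersection and thereby become intrinsic properties of $\cl$ itself rather than of individual coding functions; once these are in hand, the equivalence is bookkeeping with the induced operators $\cl_f$. The one point to handle carefully is the symmetric (two-sided) nature of the condition $f_{a(X)}=f_X$, since $a$ need be neither extensive nor isotone, which is precisely why both inclusions $a(X)\subseteq\cl(X)$ \emph{and} $X\subseteq\cl(a(X))$ are needed.
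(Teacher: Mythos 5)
Your proof is correct, but it takes a genuinely different route from the paper's. The paper works bottom-up, modifying $a$ itself in three explicit steps, each of which preserves the set of coding functions: it first replaces $a(X)$ by the union of $a(Y)$ over the connected component of $X$ in the digraph on $2^V$ with arcs $(Y,a(Y))$ (this symmetrises the condition $f_{a(X)}=f_X$ and forces extensivity), then takes unions over all subsets of $X$ to force isotonicity, and finally iterates the resulting map $n$ times to force idempotence. You instead work top-down and lattice-theoretically: you characterise the coding functions of a closure operator $\cl'$ as exactly those $f$ with $\cl'\le\cl_f$, characterise the coding functions of the arbitrary map $a$ by the two inclusions $a(X)\subseteq\cl_f(X)$ and $X\subseteq\cl_f(a(X))$, and then take $\cl$ to be the pointwise meet of $\cl_f$ over all $f\in\mathcal{F}_a$; the crux, which you identify correctly, is that these two inclusions survive the intersection and then transfer to any $g$ with $\cl\le\cl_g$. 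What the paper's construction buys is explicitness: its $\cl$ is computed from $a$ by finitely many concrete set operations, with no quantification over coding functions. What yours buys is a canonical description: your $\cl$ is by construction the \emph{largest} closure operator whose coding functions include all of $\mathcal{F}_a$, and your argument shows this largest operator acquires no extra coding functions. Two points to state more carefully: (i) idempotence of the pointwise meet does not literally ``pass to intersections''; it needs the (standard, one-line) argument $\cl(\cl(X))=\bigcap_f \cl_f(\cl(X))\subseteq\bigcap_f \cl_f(\cl_f(X))=\cl(X)$, which uses $\cl\le\cl_f$ together with isotonicity and idempotence of each $\cl_f$; (ii) $\mathcal{F}_a$ ranges over partitions of arbitrary finite sets $B$, so it is not itself a finite family --- the correct statement is that the intersectands $\cl_f(X)$ are subsets of the finite set $V$, hence only finitely many distinct sets occur and the meet is well defined.
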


\begin{proof}
We take three steps. Firstly, construct the digraph on $2^V$ with arcs $(Y,a(Y))$ for all $Y \subseteq V$. For any $X \subseteq V$, denote the connected component containing $X$ as $C(X)$. Then we claim that $b(X) := \bigcup_{Y \in C(X)} a(Y)$ is equivalent to $a$ (we note that $b$ is extensive). Indeed, if $f$ is a coding function for $a$, then $f_X = f_{a(X)}$. Hence for any $Y \in C(X)$, $f_Y = f_X$ and we obtain $f_{b(X)} = f_X$. Conversely, we have $b(X) = b(a(X))$ and hence if $f$ is a coding function for $b$, then $f_X = f_{b(X)} = f_{b(a(X))} =  f_{a(X)}$ for all $X$.

Secondly, we claim that $c(X) := \bigcup_{Y \subseteq X} b(Y)$ is equivalent to $b$ (we note that $c$ is extensive and isotone). Indeed, if $f$ is a coding function for $b$ and $Y \subseteq X$, then $f_X$ refines $f_Y = f_{b(Y)}$; thus $f_X$ refines $f_{c(X)}$. The converse is immediate hence $f_X = f_{c(X)}$ and $f$ is a coding function for $c$. Conversely, if $f$ is a coding function for $c$, then $f_X = f_{c(X)}$ refines $f_{b(X)}$ and hence is equal to $f_{b(X)}$ for all $X$.

Thirdly, we claim that $\cl(X) := c^n(Y)$ is equivalent to $c$ (we note that $\cl$ is a closure operator). Indeed, if $f$ is a coding function for $c$, then $f_X = f_{c(X)} = \ldots = f_{c^n(X)}$. Conversely, $f_X = f_{\cl(X)}$ refines $f_{c(X)}$.
\end{proof}

\bibliographystyle{IEEEtran}
\bibliography{g}

\end{document}